% Note that the a4paper option is mainly intended so that authors in
% countries using A4 can easily print to A4 and see how their papers will
% look in print - the typesetting of the document will not typically be
% affected with changes in paper size (but the bottom and side margins will).
% Use the testflow package mentioned above to verify correct handling of
% both paper sizes by the user's LaTeX system.
%
% Also note that the "draftcls" or "draftclsnofoot", not "draft", option
% should be used if it is desired that the figures are to be displayed in
% draft mode.

%\newcommand{\CLASSINPUTinnersidemargin}{1in}
%\newcommand{\CLASSINPUToutersidemargin}{.95in}

\documentclass[conference, onecolumn]{IEEEtran}
\ifCLASSINFOpdf
  % \usepackage[pdftex]{graphicx}
  % declare the path(s) where your graphic files are
  % \graphicspath{{../pdf/}{../jpeg/}}
  % and their extensions so you won't have to specify these with
  % every instance of \includegraphics
  % \DeclareGraphicsExtensions{.pdf,.jpeg,.png}
\else
  % or other class option (dvipsone, dvipdf, if not using dvips). graphicx
  % will default to the driver specified in the system graphics.cfg if no
  % driver is specified.
  % \usepackage[dvips]{graphicx}
  % declare the path(s) where your graphic files are
  % \graphicspath{{../eps/}}
  % and their extensions so you won't have to specify these with
  % every instance of \includegraphics
  % \DeclareGraphicsExtensions{.eps}
\fi
% graphicx was written by David Carlisle and Sebastian Rahtz. It is
% required if you want graphics, photos, etc. graphicx.sty is already
% installed on most LaTeX systems. The latest version and documentation can
% be obtained at: 
% http://www.ctan.org/tex-archive/macros/latex/required/graphics/
% Another good source of documentation is "Using Imported Graphics in
% LaTeX2e" by Keith Reckdahl which can be found as epslatex.ps or
% epslatex.pdf at: http://www.ctan.org/tex-archive/info/
%
% latex, and pdflatex in dvi mode, support graphics in encapsulated
% postscript (.eps) format. pdflatex in pdf mode supports graphics
% in .pdf, .jpeg, .png and .mps (metapost) formats. Users should ensure
% that all non-photo figures use a vector format (.eps, .pdf, .mps) and
% not a bitmapped formats (.jpeg, .png). IEEE frowns on bitmapped formats
% which can result in "jaggedy"/blurry rendering of lines and letters as
% well as large increases in file sizes.
%
% You can find documentation about the pdfTeX application at:
% http://www.tug.org/applications/pdftex

% *** MATH PACKAGES ***
%
\usepackage[cmex10]{amsmath}
\usepackage{fixltx2e}
% fixltx2e, the successor to the earlier fix2col.sty, was written by
% Frank Mittelbach and David Carlisle. This package corrects a few problems
% in the LaTeX2e kernel, the most notable of which is that in current
% LaTeX2e releases, the ordering of single and double column floats is not
% guaranteed to be preserved. Thus, an unpatched LaTeX2e can allow a
% single column figure to be placed prior to an earlier double column
% figure. The latest version and documentation can be found at:
% http://www.ctan.org/tex-archive/macros/latex/base/

\usepackage{stfloats}
% stfloats.sty was written by Sigitas Tolusis. This package gives LaTeX2e
% the ability to do double column floats at the bottom of the page as well
% as the top. (e.g., "\begin{figure*}[!b]" is not normally possible in
% LaTeX2e). It also provides a command:
\fnbelowfloat
\usepackage{amssymb}
\usepackage{amsthm}
\usepackage{algpseudocode}
\usepackage{algorithm}
\usepackage{xspace}
\usepackage{multirow}
\usepackage{rotating}
\usepackage[table]{xcolor}

%%%%%%%%%%%%%%%%%%%%%%%%%%%%%%%%%%%%%%%%%%%%%%%%%%%%%%%%%%%%%%%%%%%%%%%%%%%%%%
% Qcircuit library to build quantum circuits
%\usepackage[matrix,frame,arrow]{xypic}
%    Q-circuit version 1.2
%    Copyright (C) 2004  Steve Flammia & Bryan Eastin, 4/23/06
%    This program is free software; you can redistribute it and/or modify
%    it under the terms of the GNU General Public License as published by
%    the Free Software Foundation; either version 2 of the License, or
%    (at your option) any later version.
%
%    This program is distributed in the hope that it will be useful,
%    but WITHOUT ANY WARRANTY; without even the implied warranty of
%    MERCHANTABILITY or FITNESS FOR A PARTICULAR PURPOSE.  See the
%    GNU General Public License for more details.
%
%    You should have received a copy of the GNU General Public License
%    along with this program; if not, write to the Free Software
%    Foundation, Inc., 59 Temple Place, Suite 330, Boston, MA  02111-1307  USA

\usepackage[matrix,frame,arrow]{xy}
\usepackage{amsmath}

\newcommand{\ket}[1]{\left\vert{#1}\right\rangle}
    % Defines Dirac notation.
\newcommand{\qw}[1][-1]{\ar @{-} [0,#1]}
    % Defines a wire that connects horizontally.  By default it connects to the object on the left of the current object.
    % WARNING: Wire commands must appear after the gate in any given entry.
\newcommand{\qwx}[1][-1]{\ar @{-} [#1,0]}
    % Defines a wire that connects vertically.  By default it connects to the object above the current object.
    % WARNING: Wire commands must appear after the gate in any given entry.

    % Defines a classical wire that connects horizontally.  By default it connects to the object on the left of the current object.
    % WARNING: Wire commands must appear after the gate in any given entry.

    % Defines a classical wire that connects vertically.  By default it connects to the object above the current object.
    % WARNING: Wire commands must appear after the gate in any given entry.
\newcommand{\gate}[1]{*{\xy *+<.6em>{#1};p\save+LU;+RU **\dir{-}\restore\save+RU;+RD **\dir{-}\restore\save+RD;+LD **\dir{-}\restore\POS+LD;+LU **\dir{-}\endxy} \qw}
    % Boxes the argument, making a gate.

    % Inserts a measurement meter.

    % Inserts a measurement bubble with user defined text.

    % Inserts a measurement tab with user defined text.

    % Inserts a D-shaped measurement gate with user defined text.

    % Draws a multiple qubit measurement bubble starting at the current position and spanning #1 additional gates below.
    % #2 gives the label for the gate.
    % You must use an argument of the same width as #2 in \ghost for the wires to connect properly on the lower lines.

    % Draws a multiple qubit D-shaped measurement gate starting at the current position and spanning #1 additional gates below.
    % #2 gives the label for the gate.
    % You must use an argument of the same width as #2 in \ghost for the wires to connect properly on the lower lines.
\newcommand{\control}{*!<0em,.025em>-=-{\bullet}}
    % Inserts an unconnected control.

    % Inserts a unconnected control-on-0.
\newcommand{\ctrl}[1]{\control \qwx[#1] \qw}
    % Inserts a control and connects it to the object #1 wires below.

    % Inserts a control-on-0 and connects it to the object #1 wires below.
\newcommand{\targ}{*!<0em,.019em>=<.79em,.68em>{\xy {<0em,0em>*{} \ar @{ - } +<.4em,0em> \ar @{ - } -<.4em,0em> \ar @{ - } +<0em,.36em> \ar @{ - } -<0em,.36em>},<0em,-.019em>*+<.8em>\frm{o}\endxy} \qw}
    % Inserts a CNOT target.

    % Inserts half a swap gate.
    % Must be connected to the other swap with \qwx.
\newcommand{\multigate}[2]{*+<1em,.9em>{\hphantom{#2}} \qw \POS[0,0].[#1,0];p !C *{#2},p \save+LU;+RU **\dir{-}\restore\save+RU;+RD **\dir{-}\restore\save+RD;+LD **\dir{-}\restore\save+LD;+LU **\dir{-}\restore}
    % Draws a multiple qubit gate starting at the current position and spanning #1 additional gates below.
    % #2 gives the label for the gate.
    % You must use an argument of the same width as #2 in \ghost for the wires to connect properly on the lower lines.
\newcommand{\ghost}[1]{*+<1em,.9em>{\hphantom{#1}} \qw}
    % Leaves space for \multigate on wires other than the one on which \multigate appears.  Without this command wires will cross your gate.
    % #1 should match the second argument in the corresponding \multigate.

    % Inserts #1, overriding the default that causes entries to have zero size.  This command takes the place of a gate.
    % Like a gate, it must precede any wire commands.
    % \push is useful for forcing columns apart.
    % NOTE: It might be useful to know that a gate is about 1.3 times the height of its contents.  I.e. \gate{M} is 1.3em tall.
    % WARNING: \push must appear before any wire commands and may not appear in an entry with a gate or label.

    % Constructs a box or bracket enclosing the square block spanning rows #1-#3 and columns=#2-#4.
    % The block is given a margin #5/2, so #5 should be a valid length.
    % #6 can take the following arguments -- or . or _\} or ^\} or \{ or \} or _) or ^) or ( or ) where the first two options yield dashed and
    % dotted boxes respectively, and the last eight options yield bottom, top, left, and right braces of the curly or normal variety.
    % \gategroup can appear at the end of any gate entry, but it's good form to pick one of the corner gates.
    % BUG: \gategroup uses the four corner gates to determine the size of the bounding box.  Other gates may stick out of that box.  See \prop.
\newcommand{\rstick}[1]{*!L!<-.5em,0em>=<0em>{#1}}
    % Centers the left side of #1 in the cell.  Intended for lining up wire labels.  Note that non-gates have default size zero.
\newcommand{\lstick}[1]{*!R!<.5em,0em>=<0em>{#1}}
    % Centers the right side of #1 in the cell.  Intended for lining up wire labels.  Note that non-gates have default size zero.

    % Centers the bottom of #1 in the cell.  Intended for lining up wire labels.  Note that non-gates have default size zero.

    % Centers the top of #1 in the cell.  Intended for lining up wire labels.  Note that non-gates have default size zero.
\newcommand{\Qcircuit}[1][0em]{\xymatrix @*[o] @*=<#1>}
    % Defines \Qcircuit as an \xymatrix with entries of default size 0em.  The optional argument, #1, is for use with clusters, and allows you
    % to fix the size of the nodes.  I would not advise using it with normal circuits.

    % When Qcircuit has been passed the optional argument for cluster states, this command produces a round node of the size specified in that
    % argument.  The optional argument #2 specifies the contents of a node, while optional argument #1 is a secondary label.

    % Draws a wire or connecting line to the element #1 rows down and #2 columns forward.

    % Same as \ghost except it omits the wire leading to the left. 
\renewcommand{\Qcircuit}[1][0em]{\xymatrix @*=<#1>}
%\vfuzz2pt % Don't report over-full v-boxes if over-edge is small

% Commands for Braket notation - needs Qcircuit library
%\newcommand{\bra}[1]{\left\langle{#1}\right\vert}
%\newcommand{\ket}[1]{\left\vert{#1}\right\rangle}
\newcommand{\braket}[2]{\langle #1 | #2 \rangle}

%%%%%%%%%%%%%%%%%%%%%%%%%%%%%%%%%%%%%%%%%%%%%%%%%%%%%%%%%%%%%%%%%%%%%%%%%%%%%%%

\newcommand{\cC}{\mathcal{C}}
\newcommand{\cM}{\mathcal{M}}
\newcommand{\cA}{\mathcal{A}}
\newcommand{\cF}{\mathcal{F}}

\newcommand{\normform}{basis form}

\makeatletter
\renewcommand*\env@matrix[1][*\c@MaxMatrixCols c]{%
  \hskip -\arraycolsep
  \let\@ifnextchar\new@ifnextchar
  \array{#1}}
\makeatother

\newtheorem{theorem}{Theorem}
\newtheorem{lemma}[theorem]{Lemma}
\newtheorem{proposition}[theorem]{Proposition}
\newtheorem{corollary}[theorem]{Corollary}
\newtheorem{observation}[theorem]{Observation}

\theoremstyle{definition}
\newtheorem{definition}[theorem]{Definition}

\begin{document}
%
% paper title
% can use linebreaks \\ within to get better formatting as desired
\title{Efficient Inner-product Algorithm \\ for Stabilizer States}

% author names and affiliations
% use a multiple column layout for up to three different
% affiliations
%\author{\IEEEauthorblockN{H\'{e}ctor J. Garc\'{i}a}
%\IEEEauthorblockA{Department of EECS\\
%University of Michigan\\
%Ann Arbor, MI, 48109\\
%Email: hjgarcia@eecs.umich.edu}
%\and
%\IEEEauthorblockN{Igor L. Markov}
%\IEEEauthorblockA{Twentieth Century Fox\\
%Springfield, USA\\
%Email: imarkov@eecs.umich.edu}
%}

% conference papers do not typically use \thanks and this command
% is locked out in conference mode. If really needed, such as for
% the acknowledgment of grants, issue a \IEEEoverridecommandlockouts
% after \documentclass

% for over three affiliations, or if they all won't fit within the width
% of the page, use this alternative format:
% 
%\author{\IEEEauthorblockN{H\'{e}ctor J. Garc\'{i}a
%\hspace{20pt} Igor L. Markov
%\IEEEauthorblockA{hjgarcia@eecs.umich.edu \hspace{10pt} imarkov@eecs.umich.edu}}
%\IEEEauthorblockA{\ }
%\IEEEauthorblockA{Department of EECS\\
%University of Michigan,\\
%Ann Arbor, MI, 48109-2121}}
%
\author{\IEEEauthorblockN{H\'{e}ctor J. Garc\'{i}a\IEEEauthorrefmark{1} \hspace{25pt} 
Igor L. Markov\IEEEauthorrefmark{1} \hspace{25pt} Andrew W. Cross\IEEEauthorrefmark{2}}
\IEEEauthorblockA{\small \hspace{-15pt} hjgarcia@eecs.umich.edu \hspace{10pt} imarkov@eecs.umich.edu  
\hspace{10pt} andrew.w.cross@ibm.com}
\IEEEauthorblockA{\ }
\IEEEauthorblockA{\IEEEauthorrefmark{1} \small University of Michigan -- EECS Department \\ 
2260 Hayward Street, Ann Arbor, MI, 48109-2121} 
\IEEEauthorblockA{\IEEEauthorrefmark{2} \small IBM T. J. Watson Research Center  \\ 
Yorktown Heights, NY 10598}
}

% use for special paper notices
%\IEEEspecialpapernotice{(Invited Paper)}

% make the title area
\maketitle

% to add page numbering
\thispagestyle{plain}
\pagestyle{plain}

\begin{abstract}

Large-scale quantum computation is likely to require massive quantum error correction (QEC). 
QEC codes and circuits are described via the stabilizer formalism, which represents
{\em stabilizer states} by keeping track of the operators that preserve them. Such states 
are obtained by {\em stabilizer circuits} (consisting of CNOT, Hadamard and Phase only)
and can be represented compactly on conventional computers 
using $\Omega (n^2)$ bits, where $n$ is the number of qubits \cite{Gottes98}. 
Although techniques for the efficient simulation of stabilizer circuits have been studied
extensively \cite{AaronGottes, Gottes, Gottes98}, techniques for efficient manipulation 
of stabilizer states are not currently available. To this end, we leverage
the theoretical insights from \cite{AaronGottes} and \cite{Vanden} to design new algorithms for: 
({\em i}) obtaining {\em canonical generators} for stabilizer states, ({\em ii}) obtaining  
{\em canonical stabilizer circuits}, and ({\em iii}) computing the inner product between stabilizer states.
Our inner-product algorithm takes $O(n^3)$ time in general, but observes
quadratic behavior for many practical instances relevant to QECC (e.g., GHZ states). 
We prove that each $n$-qubit stabilizer state has exactly $4(2^n - 1)$ {\em nearest-neighbor 
stabilizer states}, and verify this claim experimentally using our algorithms.
We design techniques for representing arbitrary quantum states
using {\em stabilizer frames} and generalize our algorithms to compute 
the inner product between two such frames.
\end{abstract}

% no keywords

% For peer review papers, you can put extra information on the cover
% page as needed:
% \ifCLASSOPTIONpeerreview
% \begin{center} \bfseries EDICS Category: 3-BBND \end{center}
% \fi
%
% For peerreview papers, this IEEEtran command inserts a page break and
% creates the second title. It will be ignored for other modes.
\IEEEpeerreviewmaketitle

\section{Introduction} \label{sec:intro}

\noindent
Gottesman \cite{Gottes} and Knill showed that for certain types of non-trivial 
quantum circuits known as {\em stabilizer circuits}, efficient simulation on 
classical computers is possible. Stabilizer circuits are exclusively composed 
of {\em stabilizer gates}~--~Controlled-NOT, Hadamard and Phase gates 
(Figure~\ref{fig:chp_pauli}a)~--~followed by measurements in the
computational basis. Such circuits are applied to a computational basis state
(usually $\ket{00...0}$) and produce output states called {\em stabilizer states}.
The case of purely unitary stabilizer circuits
(without measurement gates) is considered often, e.g., by consolidating
measurements at the end. Stabilizer circuits can be simulated in
poly-time by keeping track of a set Pauli operators that stabilize\footnote{An 
operator $U$ is said to stabilize a state iff $U\ket{\psi}=\ket{\psi}$.} 
the quantum state. Such {\em stabilizer operators} 
uniquely represent a stabilizer state up to an unobservable global phase. 
Equation~\ref{eq:stab_count} shows that the number of $n$-qubit stabilizer states
grows as $2^{n^2/2}$, therefore, describing a generic stabilizer state
requires at least $n^2/2$ bits. Despite their compact representation, 
stabilizer states can exhibit multi-qubit entanglement and are
often encountered in many quantum information applications
such as Bell states, GHZ states, error-correcting codes and 
one-way quantum computation. To better understand the role stabilizer states 
play in such applications, researchers have designed techniques to quantify 
the amount of entanglement \cite{Fattal, Wunder, Hein}
in such states and studied relevant properties such as
purification schemes \cite{Dur}, Bell inequalities \cite{Guehne} and equivalence
classes \cite{Vanden04}. Efficient algorithms for the manipulation of stabilizer 
states (e.g., computing the angle between them), can help lead to additional 
insights related to linear-algebraic and geometric properties of stabilizer states.
	
In this work, we describe in detail algorithms for the efficient computation
of the inner product between stabilizer states. %Inner products facilitate 
%a large number of linear-algebraic and quantum-mechanical operations.
We adopt the approach outlined in \cite{AaronGottes}, which 
requires the synthesis of a unitary stabilizer circuit
that maps a stabilizer state to a computational basis state. 
The work in \cite{AaronGottes} shows that, for any unitary
stabilizer circuit, there exists an equivalent block-structured 
{\em canonical circuit} that applies a block of Hadamard ($H$) gates 
only, followed by a block of CNOT ($C$) only, then a block of 
Phase ($P$) gates only, and so on in the $7$-block sequence 
$H$-$C$-$P$-$C$-$P$-$C$-$H$. Using an alternate representation
for stabilizer states, the work in \cite{Vanden} proves the existence of a 
($H$-$C$-$P$-$CZ$)-canonical circuit, 
where the $CZ$ block consists of Controlled-$Z$ (CPHASE) gates.  
However, no algorithms are known to synthesize such smaller $4$-block
circuits given an arbitrary stabilizer state. In contrast,
we describe an algorithm for synthesizing ($H$-$C$-$CZ$-$P$-$H$)-canonical 
circuits given any input stabilizer state. We prove that any $n$-qubit 
stabilizer state $\ket{\psi}$ has exactly $4(2^n - 1)$ 
{\em nearest-neighbors} -- stabilizer states $\ket{\varphi}$ such
that $|\braket{\psi}{\varphi}|$ attains the largest possible value $\neq 1$.
Furthermore, we design techniques for representing arbitrary quantum states
using {\em stabilizer frames} and generalize our algorithms to compute 
the inner product between two such frames.

This paper is structured as follows. Section \ref{sec:background} reviews
the stabilizer formalism and relevant algorithms for 
manipulating stabilizer-based representations of quantum states.
Section \ref{sec:inprod_stab} describes our circuit-synthesis and inner-product 
algorithms. In Section \ref{sec:validate},
we evaluate the performance of our algorithms. Our 
findings related to geometric properties of stabilizer states
are described in Section \ref{sec:stabneighbors}.
In Section \ref{sec:stabframes}, we discuss stabilizer frames 
and how they can be used to represent arbitrary states and extend
our algorithms to compute the inner product between frames.
%In Section \ref{sec:app}, we discuss an additional application
%for one of our algorithms. 
Section \ref{sec:conclude} closes with concluding remarks.

	\begin{figure*}[!t]\footnotesize\centering
	\vspace{-10pt}
	\begin{tabular}{cc}
    $
        H = \frac{1}{\sqrt{2}}\begin{pmatrix}
            1 & 1 \\
            1 & -1 \end{pmatrix} \quad
        P = \begin{pmatrix}
            1 & 0 \\
            0 & i \end{pmatrix} \quad
        CNOT = \begin{pmatrix}
            1 & 0 & 0 & 0 \\
            0 & 1 & 0 & 0 \\
            0 & 0 & 0 & 1 \\
            0 & 0 & 1 & 0 \end{pmatrix}
   $ & \hspace{40pt}
   $ X = \begin{pmatrix}
                0 & 1 \\
                1 & 0 \end{pmatrix} \quad
            Y = \begin{pmatrix}
                0 & -i \\
                i &  0 \end{pmatrix} \quad
            Z = \begin{pmatrix}
                1 & 0 \\
                0 & -1 \end{pmatrix}
    $ \\ 
    & \\
    \parbox{.40\linewidth}{\caption{\label{fig:chp_pauli} 
    (a) Unitary stabilizer gates Hadamard (H), Phase (P)}}
    & \hspace{20pt} Fig. 1.\hspace{6pt}(b) The Pauli matrices. \\
    and Controlled-NOT (CNOT). &
    \end{tabular}
 	\vspace{-10pt}
	\end{figure*}
	
\section{Background and Previous Work}  \label{sec:background}

Gottesman~\cite{Gottes98} developed a description for 
quantum states involving the {\em Heisenberg representation} often 
used by physicists to describe atomic phenomena. In this model,
one describes quantum states by keeping track of their symmetries 
rather than explicitly maintaining complex vectors. 
The symmetries are operators for which these states are $1$-eigenvectors.
Algebraically, symmetries form {\em group} structures,
which can be specified compactly by group generators.
It turns out that this approach, also known 
as the {\em stabilizer formalism}, can be used to represent an
important class of quantum states. 

\subsection{The stabilizer formalism} \label{sec:stab}

A unitary operator $U$ {\em stabilizes} a state $\ket{\psi}$ if
$\ket{\psi}$ is a $1$--eigenvector of $U$, i.e., $U\ket{\psi} 
= \ket{\psi}$ \cite{Gottes, NielChu}. We are interested in operators $U$
derived from the Pauli matrices shown in Figure~\ref{fig:chp_pauli}b 
The following table lists the one-qubit states stabilized
by the Pauli matrices.

\begin{center}
	%\vspace{-6pt}
	\begin{tabular}{lccclc}
		$X$ : & $(\ket{0}+\ \ket{1})/\sqrt{2}$ &&& $-X$ : & $(\ket{0}-\ \ket{1})/\sqrt{2}$ \\
		$Y$ : & $(\ket{0}+i\ket{1})/\sqrt{2}$ &&&$-Y$ : & $(\ket{0}-i\ket{1})/\sqrt{2}$ \\
		$Z$ : & $\ket{0}$ &&& $-Z$ : & $\ket{1}$ \\
	\end{tabular}
	%\vspace{-6pt}
\end{center}

Observe that $I$ stabilizes all states and $-I$ does not stabilize any state. 
As an example, the entangled state $(\ket{00} + \ket{11})/\sqrt{2}$ is stabilized by 
the Pauli operators $X\otimes X$, $-Y\otimes Y$, $Z\otimes Z$ and $I\otimes I$.  
As shown in Table~\ref{tab:pauli_mult}, it turns out that the Pauli matrices 
along with $I$ and the multiplicative factors $\pm1$, $\pm i$, form a 
{\em closed group} under matrix multiplication \cite{NielChu}. Formally, the {\em Pauli group} 
$\mathcal{G}_n$ on $n$ qubits consists of the $n$-fold tensor product 
of Pauli matrices, $P = i^kP_1\otimes\cdot\cdot\cdot\otimes P_n$ 
such that $P_j\in\{I, X, Y, Z\}$ and $k\in\{0,1,2,3\}$. For brevity, the tensor-product symbol 
is often omitted so that $P$ is denoted by a string of $I$, $X$, $Y$ 
and $Z$ characters or {\em Pauli literals} and a separate integer value
$k$ for the phase $i^k$. This string-integer pair representation allows us to compute 
the product of Pauli operators without explicitly 
computing the tensor products,\footnote{\scriptsize This holds true due to the 
identity: $(A\otimes B)(C \otimes D)=(AC\otimes BD)$.} ~e.g., 
$(-IIXI)(iIYII) = -iIYXI$. Since $\mid \mathcal{G}_n\mid= 4^{n+1}$, 
$\mathcal{G}_n$ can have at most $\log_2 \mid \mathcal{G}_n \mid = 
\log_2 4^{n+1} = 2(n + 1)$ irredundant generators \cite{NielChu}.
The key idea behind the stabilizer formalism is to represent an 
$n$-qubit quantum state $\ket{\psi}$ by its {\em stabilizer group} 
$S(\ket{\psi})$ -- the subgroup of $\mathcal{G}_n$ that stabilizes $\ket{\psi}$.
As the following theorem shows, if $|S(\ket{\psi})|=2^n$, the group uniquely specifies 
$\ket{\psi}$. 

	\begin{table}[!b]
        \centering
        \parbox{.45\linewidth}{ %\hspace{2mm}
    		\caption{\label{tab:pauli_mult} Multiplication table for Pauli matrices.
    		Shaded cells indicate anticommuting products.}}
    		\vspace{-5pt}
    		\begin{center}
        \begin{tabular}{|c||c|c|c|c|}
            \hline
                &   $I$ & $X$                         & $Y$   & $Z$ \\ \hline\hline
            $I$ &   $I$ & $X$                         & $Y$   &  $Z$ \\ \hline
            $X$ &   $X$ & $I$   & \cellcolor[gray]{0.85} $iZ$  & \cellcolor[gray]{0.85} $-iY$ \\ \hline
            $Y$ &   $Y$ & \cellcolor[gray]{0.85} $-iZ$ & $I$   & \cellcolor[gray]{0.85} $iX$ \\ \hline
            $Z$ &   $Z$ & \cellcolor[gray]{0.85} $iY$  & \cellcolor[gray]{0.85} $-iX$ & $I$ \\
            \hline
        \end{tabular}
        \end{center}
    \end{table}

    \begin{theorem} \label{th:gen_commute}
        For an $n$-qubit pure state $\ket{\psi}$ and $k\leq n$, $S(\ket{\psi}) \cong {\mathbb Z}_2^k$. 
        If $k=n$, $\ket{\psi}$ is specified uniquely by $S(\ket{\psi})$ and is called a stabilizer state.
	\end{theorem}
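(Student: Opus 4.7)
The plan is to prove the two assertions separately: first the abstract group-theoretic structure $S(\ket{\psi})\cong \mathbb{Z}_2^k$ with $k\le n$, and then the uniqueness claim when $k=n$. Throughout I would work with the string-integer representation described in the excerpt, so an element of $\mathcal{G}_n$ is a Pauli product $P$ scaled by $i^j$ with $j\in\{0,1,2,3\}$.

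First I would rule out phases and noncommuting elements from $S(\ket{\psi})$. If $i^j P\in S(\ket{\psi})$, then $+1$ must be an eigenvalue, which forces $i^j\in\{+1,-1\}$; squaring then gives $(i^j P)^2=I$, so every stabilizer has order $2$. Likewise, $-I\ket{\psi}=-\ket{\psi}\neq\ket{\psi}$, so $-I\notin S(\ket{\psi})$. Any two elements $U,V\in S(\ket{\psi})$ satisfy $UV\ket{\psi}=VU\ket{\psi}=\ket{\psi}$, and since two Pauli strings either commute or anticommute globally (from Table~\ref{tab:pauli_mult}), they must commute. Hence $S(\ket{\psi})$ is a finite abelian group in which every nonidentity element has order $2$, and by the structure theorem for finitely generated abelian groups it is isomorphic to $\mathbb{Z}_2^k$ for some $k\ge 0$.

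Next I would bound $k$ by $n$ via a dimension argument. Given $k$ independent generators $g_1,\ldots,g_k$ of $S(\ket{\psi})$, consider the projector onto their joint $+1$ eigenspace,
\[
\Pi \;=\; \prod_{i=1}^{k} \frac{I+g_i}{2}.
\]
Because the $g_i$ commute, square to $I$, and $-I$ does not lie in the group they generate, every product $\prod_i g_i^{a_i}$ over $(a_1,\ldots,a_k)\in\mathbb{Z}_2^k$ is a distinct nonidentity Pauli string when not all $a_i=0$, so each such product is traceless. Expanding $\Pi$ and taking the trace yields $\operatorname{tr}(\Pi)=2^n/2^k$, so the joint $+1$ eigenspace has dimension $2^{n-k}$. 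Since $\ket{\psi}$ lies in this eigenspace, it is nonzero, forcing $k\le n$. When $k=n$, the eigenspace is one-dimensional, so any nonzero vector stabilized by every element of $S(\ket{\psi})$ is a scalar multiple of $\ket{\psi}$, giving the stated uniqueness up to global phase.

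The main technical obstacle is the trace computation that establishes $\dim\Pi = 2^{n-k}$: the argument rests on knowing that distinct products of the generators yield distinct, nonidentity, traceless Pauli strings, which in turn relies on the earlier verification that $-I$ cannot appear among these products. The other steps (commutativity, order $2$, elementary abelian structure) are short once phases are handled; I would therefore dedicate the bulk of the writeup to justifying the projector expansion carefully.
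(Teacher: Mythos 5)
Your proposal is correct, and its first three steps (commutativity via the anticommutation contradiction, every element squaring to $I$ after ruling out phases $\pm i$, and the structure theorem for elementary abelian $2$-groups) follow the same skeleton as the paper's proof. Where you genuinely diverge is in step (iv): the paper simply asserts that each independent generator imposes a linear constraint whose solution space has dimension $2^{n-1}$, ``or half the space,'' and that stacking $n$ such constraints pins down a one-dimensional subspace. You instead prove this by expanding the projector $\Pi=\prod_i(I+g_i)/2$ as $2^{-k}\sum_{a\in\mathbb{Z}_2^k}\prod_i g_i^{a_i}$ and computing $\operatorname{tr}(\Pi)=2^{n-k}$ from the tracelessness of every nontrivial product; the observation that no such product can equal $-I$ (else $-I\in S(\ket{\psi})$) or $+I$ (else the generators are dependent) is exactly the point the paper's informal halving argument glosses over. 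Your route costs a little more bookkeeping but buys a rigorous dimension count and, as a bonus, delivers the exact dimension $2^{n-k}$ of the stabilized subspace for every $k$, not just the $k=n$ uniqueness statement. Both arguments are sound; yours is the more self-contained.
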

	\begin{proof}
            (\emph{i}) To prove that $S(\ket{\psi})$ is commutative,
            let $P, Q \in S(\ket{\psi})$ such that $PQ\ket{\psi} = \ket{\psi}$. If $P$ and $Q$ anticommute,
            $-QP\ket{\psi} = -Q(P\ket{\psi}) = -Q\ket{\psi} = -\ket{\psi} \neq \ket{\psi}$.
            Thus, $P$ and $Q$ cannot both be elements of $S(\ket{\psi})$.
			
			\noindent
            ({\em ii}) To prove that every element of $S(\ket{\psi})$ is of
            degree $2$, let $P \in S(\ket{\psi})$ such that $P\ket{\psi} = \ket{\psi}$.
            Observe that $P^2 = i^lI$ for $l\in\{0,1,2,3\}$.
            Since $P^2\ket{\psi} = P(P\ket{\psi}) = P\ket{\psi} = \ket{\psi}$, we obtain
            $i^l = 1$ and $P^2 = I$.
			
			\noindent
            ({\em iii}) From group theory, a finite Abelian group with
            $a^2 = a$ for every element must be $\cong {\mathbb Z}_2^k$.
			
			\noindent
            ({\em iv}) We now prove that $k \leq n$.
            First note that each independent generator $P \in S(\ket{\psi})$
            imposes the linear constraint $P\ket{\psi}=\ket{\psi}$
            on the $2^n$-dimensional vector space.
            The subspace of vectors that satisfy such a constraint
            has dimension $2^{n-1}$, or half the space. Let $gen(\ket{\psi})$
            be the set of generators for $S(\ket{\psi})$.
            We add independent generators to $gen(\ket{\psi})$ one by one and impose
  			their linear constraints, to limit $\ket{\psi}$ to the shared
			$1$-eigenvector. Thus the size of $gen(\ket{\psi})$ is at most $n$.
			In the case $|gen(\ket{\psi})| = n$, the $n$ independent 
			generators reduce the subspace of possible states to dimension
        		one. Thus, $\ket{\psi}$ is uniquely specified.
	\end{proof}

The proof of Theorem \ref{th:gen_commute} shows that $S(\ket{\psi})$ 
is specified by only $\log_2 2^{n} = n$ {\em irredundant stabilizer generators}. 
Therefore, an arbitrary $n$-qubit stabilizer state can be represented by
a {\em stabilizer matrix} $\cM$ whose rows represent a set of 
generators $g_1,\ldots,g_n$ for $S(\ket{\psi})$. (Hence we use the terms 
{\em generator set} and {\em stabilizer matrix} interchangeably.) Since each $g_i$ is a string 
of $n$ Pauli literals, the size of the matrix is $n\times n$. The 
phases of each $g_i$ are stored separately using a vector of $n$ integers.
Therefore, the storage cost for $\cM$ is 
$\Theta(n^2)$, which is an {\em exponential 
improvement} over the $O(2^n)$ cost often encountered in 
vector-based representations.

Theorem \ref{th:gen_commute} suggests that Pauli literals can be 
represented using only two bits, e.g., $00 = I$, $01 = Z$, $10 = X$ and $11 = Y$. 
Therefore, a stabilizer matrix can be encoded using an $n\times2n$ binary matrix or {\em tableau}.
The advantage of this approach is that this literal-to-bits mapping induces an isomorphism
${\mathbb Z}_2^{2n} \rightarrow \mathcal{G}_n$ because vector addition
in ${\mathbb Z}_2^{2}$ is equivalent to multiplication of Pauli operators
up to a global phase. The tableau implementation of the stabilizer formalism
is covered in \cite{AaronGottes, NielChu}. 

    \begin{proposition} \label{prop:stab_count}
        The number of $n$-qubit pure stabilizer states is given by
        %\vspace{-5mm}
        \begin{equation} \label{eq:stab_count}
    	   N(n) = 2^n\prod_{k=0}^{n-1}(2^{n-k} + 1) = 2^{(.5 + o(1))n^2}
    	   \vspace{-2pt}
	    \end{equation}
    \end{proposition}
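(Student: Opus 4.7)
The plan is to count stabilizer states in two stages: first count the number of maximal commuting (``isotropic'') subgroups of the Pauli group $\mathcal{G}_n$ modulo phases, then multiply by the number of sign assignments that promote such a subgroup to an actual stabilizer group.

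First, by Theorem~\ref{th:gen_commute} every $n$-qubit stabilizer state $\ket{\psi}$ corresponds to an order-$2^n$ subgroup $S(\ket{\psi})\cong\mathbb{Z}_2^n$ of $\mathcal{G}_n$ whose elements all commute and whose only phases are $\pm 1$ (phases $\pm i$ are ruled out because every element squares to $I$). Under the isomorphism $\mathcal{G}_n/\{\pm 1,\pm i\}\cong\mathbb{Z}_2^{2n}$ mentioned after the theorem, such a subgroup becomes a maximal isotropic subspace with respect to the symplectic form $\omega(P,Q)=0$ or $1$ according as $P$ and $Q$ commute or anticommute. Once the unphased subgroup $\bar S$ is fixed, each of its $n$ generators can independently be assigned sign $+1$ or $-1$, giving $2^n$ distinct stabilizer groups per $\bar S$, and each in turn singles out a unique joint $1$-eigenvector, i.e.\ a distinct stabilizer state. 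Hence $N(n)=2^n\cdot M(n)$, where $M(n)$ is the number of maximal isotropic subspaces of $\mathbb{Z}_2^{2n}$.

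Next, I would compute $M(n)$ by counting ordered independent commuting tuples $(P_1,\ldots,P_n)$ in $\mathbb{Z}_2^{2n}\setminus\{0\}$. The first literal $P_1$ may be any nonzero element, yielding $2^{2n}-1$ choices. Given independent commuting $P_1,\ldots,P_k$, the next element $P_{k+1}$ must lie in their common centralizer minus their $\mathbb{F}_2$-span; using that the symplectic form is non-degenerate, each independent commutation constraint cuts the ambient space in half, so the centralizer has size $2^{2n-k}$ and the span has size $2^k$, giving $2^{2n-k}-2^k=2^k(2^{n-k}-1)(2^{n-k}+1)$ choices. Multiplying and then dividing by the number $\prod_{k=0}^{n-1}(2^n-2^k)=2^{n(n-1)/2}\prod_{k=0}^{n-1}(2^{n-k}-1)$ of ordered $\mathbb{F}_2$-bases of a fixed $\mathbb{Z}_2^n$, the factors $2^{n(n-1)/2}$ and $\prod(2^{n-k}-1)$ cancel, leaving
\begin{equation*}
M(n)=\prod_{k=0}^{n-1}(2^{n-k}+1),
\end{equation*}
and therefore $N(n)=2^n\prod_{k=0}^{n-1}(2^{n-k}+1)$, as claimed.

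Finally, the asymptotic form follows by writing $\prod_{k=0}^{n-1}(2^{n-k}+1)=\prod_{j=1}^{n}2^{j}(1+2^{-j})=2^{n(n+1)/2}\prod_{j=1}^{n}(1+2^{-j})$, where the last product is bounded by the convergent constant $\prod_{j\ge 1}(1+2^{-j})$, so $N(n)=2^{n+n(n+1)/2+O(1)}=2^{(1/2+o(1))n^{2}}$. The main technical obstacle is the centralizer-size claim in the inductive step: it rests on the non-degeneracy of the symplectic pairing on $\mathbb{Z}_2^{2n}$ and on the fact that $k$ independent $\mathbb{F}_2$-linear conditions on a $2n$-dimensional $\mathbb{F}_2$-space cut the space to dimension $2n-k$. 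Once this linear-algebraic fact is established, the remainder is routine counting and cancellation.
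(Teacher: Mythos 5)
Your argument is correct and complete, but it is not the route the paper takes: the paper simply cites Aaronson--Gottesman for the proof and offers only a heuristic recurrence $N(n)=2(2^n+1)N(n-1)$, $N(1)=6$, justified by the informal remark that there are $2^n+1$ ways to extend the generators of an $(n-1)$-qubit stabilizer group plus a factor of $2$ for the new sign. Your proof, by contrast, is self-contained: you factor the count as (number of Lagrangian subspaces of $\mathbb{Z}_2^{2n}$) $\times$ ($2^n$ sign patterns), and obtain the Lagrangian count by the standard orbit-style argument of counting ordered isotropic bases and dividing by the $\prod_{k=0}^{n-1}(2^n-2^k)$ ordered bases of a fixed $n$-dimensional space; the telescoping cancellation and the asymptotic estimate via $\prod_{j\ge 1}(1+2^{-j})<\infty$ are both right. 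The two minor points you flag yourself are indeed the only ones needing care, and both hold: the symplectic form induced by Pauli commutation is non-degenerate, so the common centralizer of a $k$-dimensional isotropic subspace has dimension exactly $2n-k$; and every sign assignment on the generators extends consistently because a product of commuting Hermitian Paulis is again Hermitian and squares to $I$, so its phase is $\pm1$ and each of the $2^n$ characters of $\bar S$ picks out a distinct one-dimensional joint eigenspace. What your approach buys is a rigorous derivation the paper does not supply (and it makes the multiplicative structure of the formula transparent); what the paper's recurrence buys is a quick sanity check and an intuition for how states proliferate with each added qubit, at the cost of rigor.
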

    
The proof of Proposition \ref{prop:stab_count} can be found in~\cite{AaronGottes}. 
An alternate interpretation of Equation \ref{eq:stab_count} is
given by the simple recurrence relation $N(n)=2(2^n+1) N(n-1)$ with base case
$N(1) = 6$. For example, for $n=2$ the number
of stabilizer states is $60$, and for $n=3$ it is $1080$. This recurrence
relation stems from the fact that there are $2^n+1$ ways of combining the
generators of $N(n-1)$ with additional Pauli matrices to form valid
$n$-qubit generators. The factor of $2$ accounts for the increase in the number
of possible sign configurations. Table~\ref{tab:two_qbssts} and 
Appendix~\ref{app:three_qbssts} list all two-qubit and three-qubit 
stabilizer states, respectively.

	\begin{table}[!b]
		\centering
        \parbox{.75\linewidth}{\caption{\label{tab:two_qbssts} 
     Sixty two-qubit stabilizer states and their corresponding Pauli generators. Shorthand notation represents a stabilizer state as $\alpha_0, \alpha_1, \alpha_2, \alpha_3$ where $\alpha_i$ are the normalized amplitudes of the basis states. The basis states are emphasized in bold. The first column lists states whose generators do not include an upfront minus sign, and other columns introduce the signs. A sign change creates an orthogonal vector. Therefore, each row of the table gives an orthogonal basis. The cells in dark grey indicate stabilizer states with four non-zero basis amplitudes, i.e., $\alpha_i \neq 0\ \forall\ i$. The $\angle$ column indicates the angle between that state and $\ket{00}$, which has 12 nearest-neighbor states (light gray) and 15 orthogonal states ($\perp$).
     }}
        \scalebox{.90}[.90]{\begin{tabular}{|r|c|c|c||c|c|c||c|c|c||c|c|c|}
            \hline
            & \sc State & \hspace{-2.5mm} \sc Gen'tors \hspace{-2.5mm} & $\angle$
            & \sc State & \hspace{-2.5mm} \sc Gen'tors \hspace{-2.5mm} & $\angle$
            & \sc State & \hspace{-2.5mm} \sc Gen'tors \hspace{-2.5mm} & $\angle$
            & \sc State & \hspace{-2.5mm} \sc Gen'tors\hspace{-2.5mm}  & $\angle$ \\ \hline\hline
            \multirow{9}{2mm}{\rotatebox{90}{\sc Separable}}

            & \cellcolor[gray]{0.7} $1,1,1,1$  & IX, XI & \hspace{-3mm} $\pi/3$ \hspace{-3mm} & \cellcolor[gray]{0.7}  \hspace{-3mm} $1,-1,1,-1$  \hspace{-3mm} & -IX, XI  & \hspace{-3mm} $\pi/3$ \hspace{-3mm} & \cellcolor[gray]{0.7}  \hspace{-3mm} $1,1,-1,-1$  \hspace{-3mm} &  IX, -XI  & \hspace{-3mm} $\pi/3$ \hspace{-3mm} &  \cellcolor[gray]{0.7} $1,-1,-1,1$ &  -IX, -XI & \hspace{-3mm} $\pi/3$ \hspace{-3mm} \\ %\cline{2-13}
            & \cellcolor[gray]{0.7} $1,1,i,i$  &  IX, YI & \hspace{-3mm} $\pi/3$ \hspace{-3mm} & \cellcolor[gray]{0.7} $1,-1,i,-i$ & -IX, YI & \hspace{-3mm} $\pi/3$ \hspace{-3mm} & \cellcolor[gray]{0.7} $1,1,-i,-i$  &  IX, -YI & \hspace{-3mm} $\pi/3$ \hspace{-3mm} & \cellcolor[gray]{0.7} $1,-1,-i,i$  & -IX, -YI & \hspace{-3mm} $\pi/3$ \hspace{-3mm} \\ %\cline{2-13}
            & $1,1,0,0$ & IX, ZI & \cellcolor[gray]{0.85}\hspace{-3mm} $\pi/4$ \hspace{-3mm} & $1,-1,0,0$  & -IX, ZI & \cellcolor[gray]{0.85}\hspace{-3mm} $\pi/4$ \hspace{-3mm} & $0,0,1,1$  &  IX, -ZI & \hspace{-3mm} $\perp$ \hspace{-3mm} & $0,0,1,-1$ & -IX, -ZI & \hspace{-3mm} $\perp$ \hspace{-3mm} \\ %\cline{2-13}

            & \cellcolor[gray]{0.7} $1,i,1,i$  &  IY, XI & \hspace{-3mm} $\pi/3$ \hspace{-3mm} & \cellcolor[gray]{0.7} $1,-i,1,-i$ & -IY, XI & \hspace{-3mm} $\pi/3$ \hspace{-3mm} & \cellcolor[gray]{0.7} $1,i,-1,-i$ &  IY, -XI & \hspace{-3mm} $\pi/3$ \hspace{-3mm} & \cellcolor[gray]{0.7} $1,-i,-1,i$ & -IY, -XI & \hspace{-3mm} $\pi/3$ \hspace{-3mm} \\ %\cline{2-13}
            & \cellcolor[gray]{0.7} $1,i,i,-1$ &  IY, YI & \hspace{-3mm} $\pi/3$ \hspace{-3mm} & \cellcolor[gray]{0.7} $1,-i,i,1$  & -IY, YI & \hspace{-3mm} $\pi/3$ \hspace{-3mm} & \cellcolor[gray]{0.7} $1,i,-i,1$  &  IY, -YI & \hspace{-3mm} $\pi/3$ \hspace{-3mm} & \cellcolor[gray]{0.7} \hspace{-3mm} $1,-i,-i,-1$  \hspace{-3mm} & -IY, -YI & \hspace{-3mm} $\pi/3$ \hspace{-3mm}  \\ %\cline{2-13}
            & $1,i,0,0$ & IY, ZI & \cellcolor[gray]{0.85}\hspace{-3mm} $\pi/4$ \hspace{-3mm} & $1,-i,0,0$  & -IY, ZI & \cellcolor[gray]{0.85}\hspace{-3mm} $\pi/4$ \hspace{-3mm} & $0,0,1,i$ &  IY, -ZI & \hspace{-3mm} $\perp$ \hspace{-3mm} & $0,0,1,-i$ & -IY, -ZI & \hspace{-3mm} $\perp$ \hspace{-3mm} \\ %\cline{2-13}

            & $1,0,1,0$ &  IZ, XI & \cellcolor[gray]{0.85}\hspace{-3mm} $\pi/4$ \hspace{-3mm} & $0,1,0,1$  &  -IZ, XI & \hspace{-3mm} $\perp$ \hspace{-3mm} & $1,0,-1,0$  &  IZ, -XI & \cellcolor[gray]{0.85}\hspace{-3mm} $\pi/4$ \hspace{-3mm} & $0,1,0,-1$  & -IZ, -XI & \hspace{-3mm} $\perp$ \hspace{-3mm} \\ %\cline{2-13}
            & $1,0,i,0$ &  IZ, YI & \cellcolor[gray]{0.85}\hspace{-3mm} $\pi/4$ \hspace{-3mm} & $0,1,0,i$  &  -IZ, YI & \hspace{-3mm} $\perp$ \hspace{-3mm} & $1,0,-i,0$  &  IZ, -YI & \cellcolor[gray]{0.85}\hspace{-3mm} $\pi/4$ \hspace{-3mm} & $0,1,0,-i$  &  -IZ, -YI & \hspace{-3mm} $\perp$ \hspace{-3mm} \\ %\cline{2-13}
            & $\mathbf{1,0,0,0}$ & \bf IZ, ZI & $0$ & $\mathbf{0,1,0,0}$ & \bf -IZ, ZI  & \hspace{-3mm} $\perp$ \hspace{-3mm} & $\mathbf{0,0,1,0}$ & \bf IZ, -ZI  & \hspace{-3mm} $\perp$ \hspace{-3mm} &  $\mathbf{0,0,0,1}$  & \bf -IZ, -ZI & \hspace{-3mm} $\perp$ \hspace{-3mm} \\ \hline\hline

            \multirow{6}{2mm}{\rotatebox{90}{\sc Entangled}}

            & $0,1,1,0$ &  XX, YY  & \hspace{-3mm} $\perp$ \hspace{-3mm} &  $1,0,0,-1$ & -XX, YY & \cellcolor[gray]{0.85}\hspace{-3mm} $\pi/4$ \hspace{-3mm} & $1,0,0,1$  &  XX, -YY & \cellcolor[gray]{0.85}\hspace{-3mm} $\pi/4$ \hspace{-3mm} & $0,1,-1,0$ &  -XX, -YY & \hspace{-3mm} $\perp$ \hspace{-3mm} \\ %\cline{2-13}
            & $1,0,0,i$ &  XY, YX  & \cellcolor[gray]{0.85}\hspace{-3mm} $\pi/4$ \hspace{-3mm} &  $0,1,i,0$  & -XY, YX & \hspace{-3mm} $\perp$ \hspace{-3mm} & $0,1,-i,0$ &  XY, -YX & \hspace{-3mm} $\perp$ \hspace{-3mm} & $1,0,0,-i$ &  -XY, -YX & \cellcolor[gray]{0.85}\hspace{-3mm} $\pi/4$ \hspace{-3mm} \\ %\cline{2-13}
    		
    		& \cellcolor[gray]{0.7} \hspace{-3mm} $1,1,1,-1$ \hspace{-3mm} &  XZ, ZX  & \hspace{-3mm} $\pi/3$ \hspace{-3mm} &  \cellcolor[gray]{0.7} $1,1,-1,1$ & -XZ, ZX   & \hspace{-3mm} $\pi/3$ \hspace{-3mm} &  \cellcolor[gray]{0.7} $1,-1,1,1$  &  XZ, -ZX  & \hspace{-3mm} $\pi/3$ \hspace{-3mm} &  \cellcolor[gray]{0.7} \hspace{-3mm} $1,-1,-1,-1$ \hspace{-3mm} &  -XZ, -ZX & \hspace{-3mm} $\pi/3$ \hspace{-3mm} \\ %\cline{2-13}
            & \cellcolor[gray]{0.7} $1,i,1,-i$ &  XZ, ZY  & \hspace{-3mm} $\pi/3$ \hspace{-3mm} &  \cellcolor[gray]{0.7} $1,i,-1,i$ & -XZ, ZY & \hspace{-3mm} $\pi/3$ \hspace{-3mm} & \cellcolor[gray]{0.7} $1,-i,1,i$  &  XZ, -ZY  & \hspace{-3mm} $\pi/3$ \hspace{-3mm} &  \cellcolor[gray]{0.7} $1,-i,-1,-i$  &  -XZ, -ZY & \hspace{-3mm} $\pi/3$ \hspace{-3mm} \\ %\cline{2-13}

            & \cellcolor[gray]{0.7} $1,1,i,-i$ &  YZ, ZX  & \hspace{-3mm} $\pi/3$ \hspace{-3mm} &  \cellcolor[gray]{0.7} $1,1,-i,i$  &  -YZ, ZX  & \hspace{-3mm} $\pi/3$ \hspace{-3mm} &  \cellcolor[gray]{0.7} $1,-1,i,i$ & YZ, -ZX  & \hspace{-3mm} $\pi/3$ \hspace{-3mm} & \cellcolor[gray]{0.7} $1,-1,-i,-i$  &  -YZ, -ZX & \hspace{-3mm} $\pi/3$ \hspace{-3mm} \\ %\cline{2-13}
            & \cellcolor[gray]{0.7} $1,i,i,1$  &  YZ, ZY  & \hspace{-3mm} $\pi/3$ \hspace{-3mm} &  \cellcolor[gray]{0.7} $1,i,-i,-1$ &  -YZ, ZY  & \hspace{-3mm} $\pi/3$ \hspace{-3mm} &  \cellcolor[gray]{0.7} $1,-i,i,-1$  & YZ, -ZY  & \hspace{-3mm} $\pi/3$ \hspace{-3mm} &  \cellcolor[gray]{0.7} $1,-i,-i,1$   &  -YZ, -ZY & \hspace{-3mm} $\pi/3$ \hspace{-3mm} \\
            \hline
        \end{tabular}
        }
    \end{table} 
    
    \begin{observation} \label{obs:stabst_amps}
    	Consider a stabilizer state $\ket{\psi}$ represented by a set of generators of its
    	stabilizer group $S(\ket{\psi})$. Recall from the proof of Theorem \ref{th:gen_commute}
    	that, since $S(\ket{\psi}) \cong {\mathbb Z}_2^n$,
    	each generator imposes a linear constraint on $\ket{\psi}$. Therefore, the set
        of generators can be viewed as a system of linear equations whose solution
        yields the $2^n$ basis amplitudes that make up $\ket{\psi}$. Thus,
        one needs to perform Gaussian elimination to obtain the basis
        amplitudes from a generator set. 
    \end{observation}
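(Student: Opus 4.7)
The plan is to make explicit how each stabilizer generator translates to a set of linear equations on the $2^n$ computational-basis amplitudes of $\ket{\psi}$, and then to explain why Gaussian elimination on the resulting system recovers those amplitudes.

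First I would expand $\ket{\psi} = \sum_{x \in \{0,1\}^n} \alpha_x \ket{x}$ and describe the action of a single Pauli string $g = i^k P_1 \otimes \cdots \otimes P_n$ on a basis vector $\ket{x}$. Since $X$ flips the corresponding bit, $Z$ contributes a sign $(-1)^{x_j}$, and $Y = iXZ$ does both, one obtains $g\ket{x} = \lambda_g(x)\,\ket{x \oplus b_g}$, where $b_g \in \{0,1\}^n$ marks the positions of the $X$'s and $Y$'s in $g$ and $\lambda_g(x) \in \{\pm 1, \pm i\}$ is a scalar determined by $x$, $k$, and the Pauli literals of $g$. Substituting into the stabilizer equation $g_i\ket{\psi} = \ket{\psi}$ and matching coefficients of $\ket{x}$ on both sides yields, for each $x$, the linear relation $\lambda_{g_i}(x \oplus b_{g_i})\,\alpha_{x \oplus b_{g_i}} = \alpha_x$.

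Second, I would observe that each generator therefore contributes $2^n$ such equations, which pair up under the involution $x \mapsto x \oplus b_{g_i}$ and collapse to at most $2^{n-1}$ independent constraints, exactly halving the dimension of the feasible set. This matches the dimension-counting step in the proof of Theorem \ref{th:gen_commute}. Stacking the constraints from all $n$ irredundant generators produces a homogeneous linear system of the form $(M - I)\vec{\alpha} = 0$ in the $2^n$ unknowns $\alpha_x$. By Theorem \ref{th:gen_commute}, its solution space has dimension exactly one, so ordinary Gaussian elimination returns $\vec{\alpha}$ up to an overall scalar, which is then fixed by normalization and a choice of global phase.

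The only real bookkeeping subtlety, and the place where one has to be careful, is tracking the phases $\lambda_{g_i}(\cdot)$ consistently: each depends only on $x$, on the scalar $i^{k_i}$ prefactor of $g_i$, and on the positions of the $Y$ literals (via commuting $Z$ past $X$ in $Y = iXZ$). All of this is determined a priori by the generator data and is independent of the amplitudes themselves, so the system really is linear in the $\alpha_x$. Thus the observation reduces to writing down and solving a linear system of size $O(2^n)$ by Gaussian elimination, as claimed.
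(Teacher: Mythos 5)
Your write-up is correct and follows exactly the line the paper intends: the paper states this as an Observation with no formal proof beyond the remark that each generator imposes a linear constraint, and your expansion of $g\ket{x}=\lambda_g(x)\ket{x\oplus b_g}$ into the amplitude relations $\alpha_x=\lambda_{g}(x\oplus b_{g})\,\alpha_{x\oplus b_{g}}$, combined with the one-dimensional solution space from Theorem \ref{th:gen_commute}, is the standard way to make that remark rigorous. No gaps; your version simply supplies the bookkeeping the paper leaves implicit.
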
 
	
\noindent
{\bf Canonical stabilizer matrices}.
Although stabilizer states are uniquely determined by their stabilizer group, the set of 
generators may be selected in different ways. For example, the state 
$\ket{\psi} = (\ket{00} + \ket{11})/\sqrt{2}$ is uniquely specified by any of 
the following stabilizer matrices:

	\begin{center}
		%\vspace{4pt}
		\begin{tabular}{c|l|cc|l|cc|l|}
		\multirow{2}{8mm}{$\cM_1 =$} & $XX$ & & \multirow{2}{8mm}{$\cM_2 =$} & $XX$ & &
		\multirow{2}{8mm}{$\cM_3 =$} & -$YY$ \\
		& $ZZ$ & & & -$YY$ & & & $ZZ$ \\
		\end{tabular}
		%\vspace{4pt}
	\end{center} 
	
	\begin{figure}[!t]
	\centering
	\begin{tabular}{cc}
	\includegraphics[scale=.30]{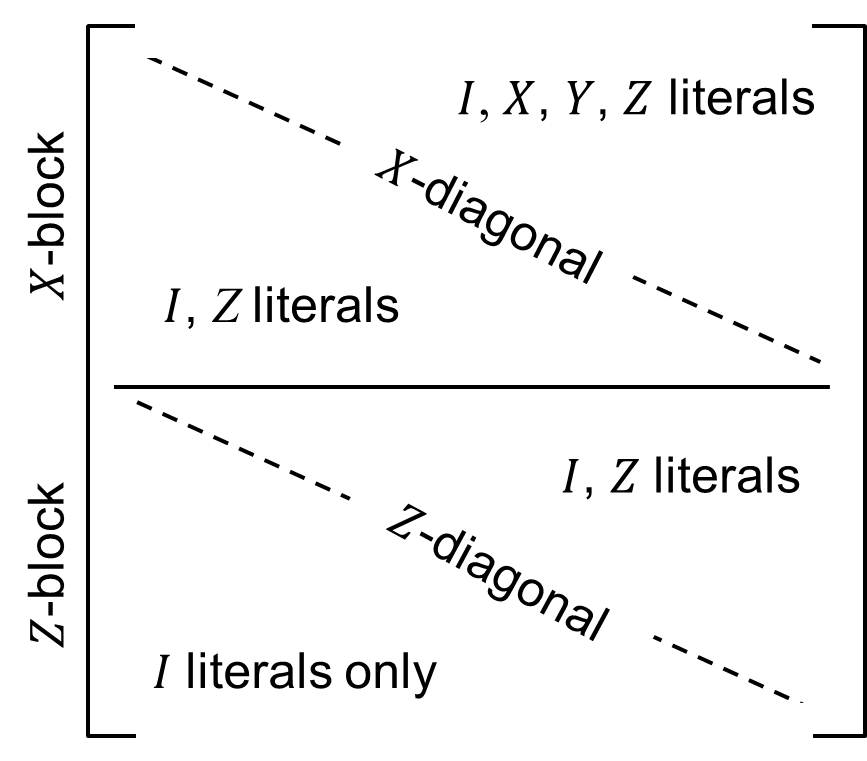} 
	&
	\raisebox{55pt}{\parbox{.45\linewidth}{
	%\vspace{4pt}
	\caption{\label{fig:sminv} 
	Canonical (row-reduced echelon) form for stabilizer matrices. 
	The $X$-block contains a {\em minimal} set of rows with $X$/$Y$ literals. 
	The rows with $Z$ literals only appear in the $Z$-block. Each block 
	is arranged so that the leading non-$I$ literal of each row is strictly 
	to the right of the leading non-$I$ literal in the row above. The number
	of Pauli (non-$I$) literals in each block is minimal.}}
	}
	\end{tabular}
	\vspace{-10pt}
	\end{figure}

\noindent
One obtains $\cM_2$ from $\cM_1$ by left-multiplying the second row
by the first. Similarly, one can also obtain $\cM_3$ from $\cM_1$ 
or $\cM_2$ via row multiplication. Observe that, multiplying any row by itself yields
$II$, which stabilizes $\ket{\psi}$. However, $II$ cannot
be used as a stabilizer generator because it is redundant
and carries no information about the structure of $\ket{\psi}$. 
This also holds true in general for $\cM$ of any size.
Any stabilizer matrix can be rearranged
by applying sequences of elementary row operations in order to obtain a particular
matrix structure. Such operations do not modify the stabilizer state.
The elementary row operations that can be performed on a stabilizer matrix are
transposition, which swaps two rows of the matrix, and multiplication,
which left-multiplies one row with another. Such operations allow one
to rearrange the stabilizer matrix in a series of steps that resemble
Gauss-Jordan elimination.\footnote{\scriptsize Since Gaussian elimination
essentially inverts the $n\times 2n$ matrix, this
could be sped up to $O(n^{2.376})$ time by using
fast matrix inversion algorithms. However, $O(n^3)$-
time Gaussian elimination seems more practical.} ~Given an $n\times n$ stabilizer matrix, 
row transpositions are performed in constant time\footnote{\scriptsize Storing pointers to rows
facilitates $O(1)$-time row transpositions -- one simply swaps 
relevant pointers.} ~while row multiplications require $\Theta(n)$ time.
Algorithm~\ref{alg:gauss_min} rearranges a stabilizer matrix into
a {\em row-reduced echelon form} that contains: ({\em i}) a {\em minimum 
set} of generators with $X$ and $Y$ literals appearing at the top, 
and ({\em ii}) generators containing a {\em minimum set} of $Z$ literals 
only appearing at the bottom of the matrix.
This particular stabilizer-matrix structure, shown in Figure \ref{fig:sminv},  
defines a canonical representation for stabilizer states \cite{Djor, Gottes98}. 
The algorithm iteratively determines which 
row operations to apply based on the Pauli (non-$I$) literals
contained in the first row and column of an increasingly smaller submatrix
of the full stabilizer matrix. Initially, the submatrix considered is the full stabilizer matrix.
After the proper row operations are applied, the dimensions of the submatrix
decrease by one until the size of the submatrix reaches one. 
The algorithm performs this process twice, once to position the
rows with $X$($Y$) literals at the top, and then again to position
the remaining rows containing $Z$ literals only at the bottom. 
Let $i\in\{1,\ldots,n\}$ and $j\in\{1,\ldots,n\}$ be the index
of the first row and first column, respectively, of submatrix
$\cA$. The steps to construct the upper-triangular portion of the row-echelon
form shown in Figure \ref{fig:sminv} are as follows.  

	\begin{itemize}
		\item[{\bf 1.}] Let $k$ be a row in $\cA$ whose $j^{th}$ literal
		is $X$($Y$). Swap rows $k$ and $i$ such that $k$ is the first row of $\cA$.
		Decrease the height of $\cA$ by one (i.e., increase $i$).
		\item[{\bf 2.}] For each row $m \in \{0,\ldots,n\}, m \neq i$ that has
		an $X$($Y$) in column $j$, use row multiplication to set the $j^{th}$
		literal in row $m$ to $I$ or $Z$.
		\item[{\bf 3.}] Decrease the width of $\cA$ by one (i.e., increase $j$).
	\end{itemize}
	
\newcommand{\rswap}{\mathrm{\tt ROWSWAP}}
\newcommand{\rmult}{\mathrm{\tt ROWMULT}}

    \begin{algorithm}[!b]
        \caption{Canonical form reduction for stabilizer matrices}
        \label{alg:gauss_min}
        \footnotesize
        \begin{algorithmic}[1]
            \Require Stabilizer matrix $\cM$ for $S(\ket{\psi})$ with rows $R_1,\ldots,R_n$
            \Ensure $\cM$ is reduced to row-echelon form
            \Statex \hspace{-5mm} $\Rightarrow$ $\rswap(\cM, i, j)$ swaps rows $R_i$ and $R_j$ of $\cM$
            \Statex \hspace{-5mm} $\Rightarrow$ $\rmult(\cM, i, j)$ left-multiplies rows $R_i$ 
            and $R_j$, returns updated $R_i$
            \Statex
            \State $i \leftarrow 1$
            \For{$j \in \{1, \dots, n\}$} \Comment{Setup $X$ block}
            	\State $k \leftarrow$ index of row $R_{k \in \{i,\ldots, n\}}$ 
            	with $j^{th}$ literal set to $X$($Y$)
            	\If{$k$ {\bf exists}}
            		\State $\rswap(\cM, i, k)$
            		\For{$m \in \{0,\ldots, n\}$} 
            			\If{$j^{th}$ literal of $R_m$ is $X$ or $Y$ and $m\neq i$}
            				\State $R_m = \rmult(\cM, R_i, R_m)$ \Comment{Gauss-Jordan elimination step}
            			\EndIf
            		\EndFor
            		\State $i \leftarrow i + 1$
            	\EndIf
            \EndFor
           	\For{$j \in \{1, \dots, n\}$} \Comment{Setup $Z$ block}
            	\State $k \leftarrow$ index of row $R_{k \in \{i,\ldots, n\}}$ 
            	with $j^{th}$ literal set to $Z$
            	\If{$k$ {\bf exists}}
            		\State $\rswap(\cM, i, k)$
            		\For{$m \in \{0,\ldots, n\}$}
            			\If{$j^{th}$ literal of $R_m$ is $Z$ or $Y$ and $m\neq i$}
            				\State $R_m = \rmult(\cM, R_i, R_m)$ \Comment{Gauss-Jordan elimination step}
            			\EndIf
            		\EndFor
            		\State $i \leftarrow i + 1$
           	\EndIf
            \EndFor
        \end{algorithmic}
    \end{algorithm}

To bring the matrix to its lower-triangular form, one executes the same process
with the following difference: ($i$) step~1 looks for rows that have
a $Z$ literal (instead of $X$ or $Y$) in column $j$, and ($ii$) step~2
looks for rows that have $Z$ or $Y$ literals (instead of $X$ or $Y$) in column $j$.
Observe that Algorithm~\ref{alg:gauss_min} ensures that the columns
in $\cM$ have at most two distinct types of non-$I$ literals.
Since Algorithm~\ref{alg:gauss_min} 
inspects all $n^2$ entries in the matrix and performs a constant number of row
multiplications each time, the runtime of the algorithm is $O(n^3)$. 
An alternative row-echelon form for stabilizer generators along 
with relevant algorithms to obtain them were introduced in \cite{Audenaert}.
However, their matrix structure is not canonical as it does not guarantee
a minimum set of generators with $X$/$Y$ literals.

\ \\ \noindent
{\bf Stabilizer circuit simulation}.
The computational basis states are stabilizer states that can be represented using the 
following stabilizer-matrix structure.

	\begin{definition} \label{def:basis_form}
		A stabilizer matrix is in {\em \normform} if it has the following structure.
		\begin{equation*}
			%\footnotesize
			\begin{array}{r}
				\pm \\
				\pm \\
				\vdots \\
				\pm
			\end{array}
			\left[
			\begin{array}{cccc}
				Z & I & \cdots & I  \\
				I & Z & \cdots & I   \\
				\vdots & \vdots & \ddots & \vdots \\
				I & I & \cdots & Z
			\end{array}
 			\right]
		\end{equation*}
	\end{definition}
	
In this matrix form, the $\pm$ sign of each row along with 
its corresponding $Z_j$-literal designates whether 
the state of the $j^{th}$ qubit is $\ket{0}$ ($+$) or $\ket{1}$ ($-$). Suppose we 
want to simulate circuit $\cC$. Stabilizer-based
simulation first initializes $\cM$ to specify some basis 
state $\ket{\psi}$. To simulate the action of each gate $U \in \cC$,
we conjugate each row $g_i$ of $\cM$ by $U$.\footnote{\scriptsize
Since $g_i\ket{\psi} = \ket{\psi}$, 
the resulting state $U\ket{\psi}$ is stabilized by $Ug_iU^\dag$ 
because $(Ug_iU^\dag) U\ket{\psi} = Ug_i\ket{\psi} = U\ket{\psi}$.}
~We require that $Ug_iU^\dag$ maps to another
string of Pauli literals so that the resulting stabilizer
matrix $\cM'$ is well-formed. It turns out that the Hadamard,
Phase and CNOT gates (Figure \ref{fig:chp_pauli}a) have such mappings, 
i.e., these gates conjugate the Pauli group onto itself \cite{Gottes98, NielChu}.
Table~\ref{tab:cliff_mult} lists the mapping for each of these
gates. 

For example, suppose we simulate a CNOT operation
on $\ket{\psi} = (\ket{00} + \ket{11})/\sqrt{2}$ using $\cM$,
\begin{center}
	%\vspace{6pt}
	\begin{tabular}{c|l|cc|l|}
		\multirow{2}{6mm}{$\cM =$}   & $XX$ & \multirow{2}{8mm}{$\xrightarrow{CNOT}$} 
		&\multirow{2}{7mm}{$\cM' =$} & $XI$ \\
		& $ZZ$ &   						&                         &  $IZ$ \\
	\end{tabular}
	%\vspace{-2pt}
\end{center}

\noindent
One can verify that the rows of $\cM'$ stabilize
$\ket{\psi}\xrightarrow{CNOT}(\ket{00} + \ket{10})/\sqrt{2}$ as required.

Since Hadamard, Phase and CNOT gates are directly simulated
using stabilizers, these gates are commonly 
called \emph{stabilizer gates}. They are 
also called \emph{Clifford gates} because they generate the Clifford group of unitary
operators. We use these names
interchangeably. Any circuit composed exclusively of stabilizer
gates is called a \emph{unitary stabilizer circuit}. 
Table~\ref{tab:cliff_mult} shows that at most two columns of $\cM$ 
are updated when one simulates a stabilizer gate. Thus, such gates are simulated in $\Theta(n)$
time. 

    \begin{theorem} \label{th:stabst}
        An $n$-qubit stabilizer state $\ket{\psi}$ can be obtained
        by applying a stabilizer circuit to the $\ket{0}^{\otimes n}$
        basis state. 
    \end{theorem}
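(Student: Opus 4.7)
The plan is to prove this constructively by exhibiting a stabilizer circuit $U$ that maps $\ket{\psi}$ to $\ket{0}^{\otimes n}$; since every stabilizer gate is unitary (hence invertible with an inverse that is itself a product of $H$, $P$, $CNOT$ gates), running $U^\dagger$ on $\ket{0}^{\otimes n}$ produces $\ket{\psi}$. Concretely, I would work at the level of the stabilizer matrix $\cM$ and use the conjugation rules of Table~\ref{tab:cliff_mult} to reduce $\cM$ to the basis form of Definition~\ref{def:basis_form} with all $+$ signs, which uniquely specifies $\ket{0}^{\otimes n}$ by Theorem~\ref{th:gen_commute}.

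First I would invoke Algorithm~\ref{alg:gauss_min} to bring $\cM$ into its canonical row-reduced echelon form; this is bookkeeping on the generators (row swaps and multiplications that do not alter $\ket{\psi}$) and gives me the convenient block layout of Figure~\ref{fig:sminv} with an $X$-block on top and a $Z$-only block below, each with a strictly increasing sequence of pivot columns. Next I would eliminate the $X$-block: for each row with pivot column $j$, if the pivot literal is $Y$, apply a Phase gate on qubit $j$ to convert it to $X$ (using $P X P^\dagger = Y$, or rather its inverse action); then apply a Hadamard on qubit $j$ to convert the pivot $X$ into $Z$. Any remaining $X$ or $Y$ entries in that row at columns $j' > j$ can be cleared by conjugating with a $CNOT$ controlled by qubit $j$ and targeted at qubit $j'$ (together with a preparatory $P$ if the off-pivot literal is $Y$).

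After all $X$-block rows have been turned into $Z$-only rows, the full matrix contains only $I$ and $Z$ literals. I would then perform $CNOT$-based Gaussian elimination on the resulting $\mathbb{Z}_2$-matrix of $Z$-supports, using the rule that $CNOT$ conjugation maps $ZI \to ZI$ and $IZ \to ZZ$, which implements exactly a row operation on the $Z$ part. This reduces $\cM$ to a diagonal of single-qubit $Z_j$ generators, possibly with minus signs. Finally, for each row bearing a $-$ sign, I apply a bit-flip on qubit $j$ (realized as a stabilizer gate via $X = HP^2H$) to clear the sign, arriving at the all-$+$ basis form for $\ket{0}^{\otimes n}$.

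The main obstacle I anticipate is verifying that the reduction steps do not interfere with one another: in particular, that applying $H$, $P$, and $CNOT$ to clear a given row's leading literals does not reintroduce $X$ or $Y$ literals into rows or columns that were already cleaned, and that sign changes produced by anticommutation (see the shaded cells of Table~\ref{tab:pauli_mult}) are either absorbed into the phase vector or corrected at the final sign-fixing stage. The strictly-increasing pivot structure supplied by Algorithm~\ref{alg:gauss_min}, together with the fact that $CNOT$ only couples two specified columns of $\cM$, is what ultimately makes these steps compose cleanly and keeps the overall circuit length polynomial in $n$.
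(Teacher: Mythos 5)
Your overall strategy is sound and is in fact the one the paper itself uses elsewhere: reducing the stabilizer matrix to basis form by Clifford conjugation and then inverting the circuit is precisely Algorithm~\ref{alg:inprod_circ} combined with Corollary~\ref{cor:stab_allzeros}. The paper's own proof of this theorem is not constructive at all --- it simply defers to \cite[Theorem~8]{AaronGottes} --- so a self-contained synthesis argument like yours is a legitimate and arguably more informative alternative, provided the elimination procedure actually works.

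As ordered, it does not. You propose, for an $X$-block row with pivot column $j$, to first apply $H_j$ (turning the pivot $X$ into $Z$) and only then clear the remaining $X$/$Y$ literals at columns $j'>j$ with $\mathrm{CNOT}_{j,j'}$. But conjugation by $\mathrm{CNOT}_{j,j'}$ fixes $Z_jX_{j'}$ (Table~\ref{tab:cliff_mult} gives $Z_1I_2\mapsto Z_1I_2$ and $I_1X_2\mapsto I_1X_2$), whereas it maps $X_jX_{j'}\mapsto X_jI_{j'}$: the CNOTs can absorb the off-pivot $X$/$Y$ literals only while the control qubit still carries an $X$ or $Y$. The CNOT block must therefore precede the Hadamard on the pivot, which is exactly the order in Algorithm~\ref{alg:inprod_circ} (template $H$-$C$-$CZ$-$P$-$H$, with the $P$ and $H$ that convert the diagonal $Y$/$X$ into $Z$ coming last). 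A second, related problem: applying $H_j$ early also converts every $Z$ sitting in column $j$ of the \emph{other} rows into an $X$, destroying the echelon structure you just built; and $\mathrm{CNOT}_{j,k}$ acts on columns $j$ and $k$ of all rows simultaneously (e.g.\ $I_jZ_k\mapsto Z_jZ_k$), so ``clearing one row'' silently modifies the rest. These side effects are why the paper's procedure clears the above-diagonal $Z$'s with CPHASE gates rather than folding them into a final ${\mathbb Z}_2$ elimination, and why it ends with a row-multiplication pass to remove the trailing $Z$'s reintroduced below the diagonal. With the gate blocks reordered in this way and that final cleanup added, your argument goes through; the sign-fixing step via $X=HP^2H$ and the inversion of the circuit are fine as stated.
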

    \begin{proof}
    	The work in \cite{AaronGottes} represents the generators using a tableau, 
    	and then shows how to construct a unitary stabilizer circuit from the 
    	tableau. We refer the reader to \cite[Theorem 8]{AaronGottes} for 
    	details of the proof.
    \end{proof}

    \begin{corollary} \label{cor:stab_allzeros}
	   	An $n$-qubit stabilizer state $\ket{\psi}$ can be transformed
  		by stabilizer gates into the $\ket{0}^{\otimes n}$ basis state.
   	\end{corollary}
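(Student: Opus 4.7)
The corollary is essentially the ``reverse direction'' of Theorem~\ref{th:stabst}, and my plan is to derive it as an immediate consequence by inverting the circuit guaranteed there. More precisely, I would invoke Theorem~\ref{th:stabst} to produce a unitary stabilizer circuit $\cC$ such that $\cC\ket{0}^{\otimes n} = \ket{\psi}$. Since $\cC$ is unitary, $\cC^{\dagger}\ket{\psi} = \ket{0}^{\otimes n}$, so it only remains to argue that $\cC^{\dagger}$ is itself a stabilizer circuit.

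For that step, I would observe that the Clifford group is closed under inversion, and give an explicit construction at the gate level: if $\cC = U_t U_{t-1}\cdots U_1$ with each $U_\ell \in \{H, P, CNOT\}$, then $\cC^{\dagger} = U_1^{\dagger}U_2^{\dagger}\cdots U_t^{\dagger}$, so it suffices to show that each $U_\ell^{\dagger}$ can be written as a short stabilizer subcircuit. Two of the generators are self-inverse ($H^{\dagger} = H$ and $CNOT^{\dagger} = CNOT$), and for the phase gate one has $P^{\dagger} = P^{3}$, which is a product of three stabilizer gates. Substituting these replacements into $\cC^{\dagger}$ produces a unitary stabilizer circuit $\cC'$ with $\cC'\ket{\psi} = \ket{0}^{\otimes n}$.

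There is essentially no obstacle here; the only mildly delicate point is making sure the reader sees that ``stabilizer circuit'' (defined as a composition of $H$, $P$, $CNOT$) is closed under inversion even though $P$ is not self-inverse, which the $P^{\dagger} = P^3$ identity settles in one line. I would therefore present the argument as a two- or three-sentence corollary proof that cites Theorem~\ref{th:stabst}, takes the adjoint of the circuit it produces, and notes the above gate-level replacements.
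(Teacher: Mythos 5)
Your proposal is correct and matches the paper's own proof: both invoke Theorem~\ref{th:stabst} to obtain a stabilizer circuit $\cC$ with $\cC\ket{0}^{\otimes n}=\ket{\psi}$, then invert it by reversing the gate order and replacing each $P$ with $PPP$ (using that $H$ and CNOT are self-inverse). No gaps.
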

   	\begin{proof}
        Since every stabilizer state can be produced by applying some unitary
        stabilizer circuit $\cC$ to the $\ket{0}^{\otimes n}$ state, it suffices to reverse
        $\cC$ to perform the inverse transformation. To reverse a stabilizer
        circuit, reverse the order of gates and replace every $P$ gate with $PPP$.
   \end{proof}
   
   \begin{table}[!t]
 		\parbox{.55\linewidth}{\caption{\label{tab:cliff_mult} Conjugation of the Pauli-group elements
    	by the stabilizer gates \cite{NielChu}. For the CNOT case, subscript $1$
    	indicates the control and $2$ the target.}	}
        \centering \footnotesize
        \begin{tabular}{cc}
        \begin{tabular}{|c||c|c|}
            \hline
                \sc Gate & \sc Input  & \sc Output \\ \hline\hline
                          & $X$ & $Z$ \\
                $H$       & $Y$ & -$Y$ \\
                          & $Z$ & $X$  \\ \hline
                          & $X$ & $Y$ \\
                $P$       & $Y$ & -$X$ \\
                          & $Z$ & $Z$   \\
            \hline
        \end{tabular}
        & \hspace{-5pt}
         \begin{tabular}{|c||c|c|}
            \hline
            \sc Gate & \sc Input  & \sc Output  \\ \hline\hline
            \multirow{6}{10mm}{$CNOT$} & $I_1X_2$ & $I_1X_2$   \\
                      & $X_1I_2$ & $X_1X_2$    \\
                      & $I_1Y_2$ & $Z_1Y_2$      \\
                      & $Y_1I_2$ & $Y_1X_2$     \\
                      & $I_1Z_2$ & $Z_1Z_2$   \\
                      & $Z_1I_2$ & $Z_1I_2$    \\
            \hline
        \end{tabular} \\ \\
		\end{tabular}
		\vspace{-10pt}
 	\end{table}
	
The stabilizer formalism also admits one-qubit measurements 
in the computational basis \cite{Gottes98}. 
However, the updates to $\cM$ for such gates are not as efficient
as for stabilizer gates. Note that any qubit
$j$ in a stabilizer state is either in a $\ket{0}$ ($\ket{1}$) state or 
in an unbiased\footnote{\scriptsize 
An arbitrary state $\ket{\psi}$ with computational basis decomposition
$\sum_{k=0}^n \lambda_k\ket{k}$ is said to be {\em unbiased}
if for all $\lambda_i \neq 0$ and $\lambda_j\neq 0$,
$|\lambda_i|^2 = |\lambda_j|^2$. Otherwise,
the state is {\em biased}. One can verify that none of the stabilizer 
gates produce biased states.}~superposition 
of both. The former case is called a {\em deterministic
outcome} and the latter a {\em random outcome}. We can tell these 
cases apart in $\Theta(n)$ time by searching for $X$ or $Y$ literals in
the $j^{th}$ column of $\cM$. If such literals are found, the qubit must be in 
a superposition and the outcome is random with equal probability
($p(0) = p(1) = .5$); otherwise the outcome is deterministic
($p(0) = 1$ or $p(1) = 1$). 

	{\em Random case}: one flips an unbiased coin to decide the outcome 
	and then updates $\cM$ to make it consistent with the 
    outcome obtained. This requires at most $n$ row multiplications 
    leading to $O(n^2)$ runtime \cite{AaronGottes, NielChu}. 
	
	{\em Deterministic case}: no updates to 
	$\cM$ are necessary but we need to figure out whether the state 
	of the qubit is $\ket{0}$ or $\ket{1}$, i.e., whether the qubit is stabilized by $Z$ or -$Z$,
	respectively. One approach is to apply Algorithm \ref{alg:gauss_min} to put $\cM$ in  
	row-echelon form. 
	This removes redundant literals from $\cM$ in order
	to identify the row containing a $Z$ in its $j^{th}$ position
	and $I$ everywhere else. The $\pm$ phase of this row 
	decides the outcome of the measurement. Since this approach
	is a form of Gaussian elimination, it takes $O(n^3)$ time in 
	practice.   

Aaronson and Gottesman \cite{AaronGottes} improved the runtime of deterministic
measurements by doubling the size of $\cM$ to include $n$ {\em destabilizer generators} 
in addition to the $n$ stabilizer generators. Such destabilizer generators help 
identify exactly which row multiplications to compute in order 
to decide the measurement outcome. This approach avoids Gaussian elimination
and thus deterministic measurements are computed in $O(n^2)$ time.

\section{Inner-product and circuit-synthesis algorithms}  
\label{sec:inprod_stab}

Given $\braket{\psi}{\varphi} = re^{i\alpha}$, we
normalize the global phase of $\ket{\psi}$ to ensure, without loss
of generality, that $\braket{\psi}{\varphi} \in \mathbb{R}_+$.

	\begin{theorem} \label{th:stab_ortho}
		Let $S(\ket{\psi})$ and $S(\ket{\varphi})$ be the stabilizer
		groups for $\ket{\psi}$ and $\ket{\varphi}$, respectively.
		If there exist $P \in S(\ket{\psi})$ and $Q \in S(\ket{\varphi})$
		such that $P =$ -$Q$, then $\ket{\psi}\perp\ket{\varphi}$.
		%are orthogonal.
	\end{theorem}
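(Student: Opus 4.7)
The plan is to exploit the two stabilizing identities $P\ket{\psi}=\ket{\psi}$ and $Q\ket{\varphi}=\ket{\varphi}$ together with the hypothesis $P=-Q$ to force $\braket{\psi}{\varphi}$ to equal its own negative, hence to vanish.

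First I would observe that every element of a stabilizer group is Hermitian. This follows from part ($ii$) of the proof of Theorem~\ref{th:gen_commute}: any $R\in S(\ket{\psi})$ satisfies $R^2=I$, while $R$ is already unitary (as an element of $\mathcal{G}_n$), so $R^{\dagger}=R^{-1}=R$. In particular $Q^{\dagger}=Q$.

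Next, I would substitute to get the key chain of equalities. From $P\ket{\psi}=\ket{\psi}$ and $P=-Q$, we obtain $Q\ket{\psi}=-\ket{\psi}$. Taking the adjoint and using $Q^{\dagger}=Q$ gives $\bra{\psi}Q=-\bra{\psi}$. Now compute
\begin{equation*}
\braket{\psi}{\varphi} \;=\; \bra{\psi}\bigl(Q\ket{\varphi}\bigr) \;=\; \bigl(\bra{\psi}Q\bigr)\ket{\varphi} \;=\; -\braket{\psi}{\varphi},
\end{equation*}
where the first equality uses $Q\ket{\varphi}=\ket{\varphi}$ and the last uses $\bra{\psi}Q=-\bra{\psi}$. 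Therefore $2\braket{\psi}{\varphi}=0$, so $\braket{\psi}{\varphi}=0$ and $\ket{\psi}\perp\ket{\varphi}$.

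There is really no hard step here; the only subtlety worth flagging is the Hermiticity of stabilizer group elements, since a generic element of $\mathcal{G}_n$ carries a phase $i^k$ that can make it anti-Hermitian. The constraint $R^2=I$ inherited from the stabilizer property rules out the phases $\pm i$ and leaves $Q$ Hermitian, which is exactly what is needed to move $Q$ from the ket side to the bra side of the inner product without introducing an extra sign.
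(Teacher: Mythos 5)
Your proof is correct and is essentially the paper's argument made explicit: the paper simply notes that $\ket{\psi}$ and $\ket{\varphi}$ are eigenvectors of $P$ with eigenvalues $+1$ and $-1$ and hence orthogonal, while you spell out the underlying computation and justify the Hermiticity of $Q$ via $Q^2=I$. (A minor remark: unitarity of $Q$ alone would also suffice, since $\braket{\psi}{\varphi}=\bra{\psi}Q^{\dagger}Q\ket{\varphi}=-\braket{\psi}{\varphi}$, but your Hermiticity route is equally valid.)
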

	\begin{proof}
		Since $\ket{\psi}$ is a $1$-eigenvector of $P$ and
		$\ket{\varphi}$ is a $(-1)$-eigenvector of $P$, they
		must be orthogonal.
	\end{proof}

	\begin{theorem}\label{th:inprod_aron}{\em\bf \cite{AaronGottes}}
		Let $\ket{\psi}$, $\ket{\varphi}$ be non-orthogonal
		stabilizer states. Let $s$ be the minimum, over all sets of
		generators $\{P_1,\ldots,P_n\}$ for $S(\ket{\psi})$ and
		$\{Q_1,\ldots,Q_n\}$ for $S(\ket{\varphi})$, of the number
		of different $i$ values for which $P_i \neq Q_i$. Then,
		$|\braket{\psi}{\varphi}|=2^{-s/2}$.
	\end{theorem}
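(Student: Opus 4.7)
The plan is to reduce to the canonical situation $\ket{\psi}=\ket{0}^{\otimes n}$ by invoking Corollary~\ref{cor:stab_allzeros}, then expand the overlap using the $\ket{0^n}\!\bra{0^n}$ projector, and finally read off $s$ from a group-intersection count. Concretely, pick a Clifford $U$ with $U\ket{\psi}=\ket{0}^{\otimes n}$. Since $U$ is unitary, $|\braket{\psi}{\varphi}|=|\braket{0^n}{U\varphi}|$, and since conjugation by $U$ is an automorphism of $\mathcal{G}_n$ it carries generator sets to generator sets literal-for-literal, so the minimum $s$ is preserved. We may therefore assume $\ket{\psi}=\ket{0}^{\otimes n}$ with $S(\ket{\psi})=\langle Z_1,\dots,Z_n\rangle$ and work with the rotated $S(\ket{\varphi})$.

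Next, I would write the projector onto $\ket{0^n}$ as $\ket{0^n}\!\bra{0^n}=\prod_{j=1}^n\tfrac{I+Z_j}{2}=\tfrac{1}{2^n}\sum_{T}T$, where $T$ ranges over the $2^n$ tensor products of $I$ and $Z$. This gives
\begin{equation*}
|\braket{0^n}{\varphi}|^2 \;=\; \bra{\varphi}\ket{0^n}\!\bra{0^n}\varphi\rangle \;=\; \frac{1}{2^n}\sum_{T}\bra{\varphi}T\ket{\varphi}.
\end{equation*}
Each expectation $\bra{\varphi}T\ket{\varphi}$ is evaluated by the standard Pauli dichotomy: if $T$ anticommutes with some $Q\in S(\ket{\varphi})$, inserting $Q\ket{\varphi}=\ket{\varphi}$ and moving $Q$ past $T$ forces $\bra{\varphi}T\ket{\varphi}=0$; if $T$ commutes with every element of $S(\ket{\varphi})$, then because $S(\ket{\varphi})$ is a maximal isotropic subgroup of the Pauli group mod its center, $T$ itself (up to a sign) lies in $S(\ket{\varphi})$, so $\bra{\varphi}T\ket{\varphi}=\pm1$. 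The non-orthogonality hypothesis combined with Theorem~\ref{th:stab_ortho} rules out the $-1$ case: if $-T\in S(\ket{\varphi})$ while $+T\in S(\ket{\psi})$, the states would be orthogonal.

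Letting $H=S(\ket{\psi})\cap S(\ket{\varphi})$ viewed as subgroups of the real (sign-less) Pauli group, the surviving terms are exactly those $T\in H$, and by Theorem~\ref{th:gen_commute} we have $H\cong\mathbb{Z}_2^k$ for some $k\le n$. Hence $|\braket{0^n}{\varphi}|^2=2^{k-n}$. It remains to identify $k=n-s$. Any shared generator $P_i=Q_i$ must belong to $H$, so at most $k$ coordinates can match; conversely, taking a basis $g_1,\dots,g_k$ of $H$ and extending independently on each side to full generator sets of $S(\ket{\psi})$ and $S(\ket{\varphi})$ realizes $k$ matches. Thus $s=n-k$ and $|\braket{\psi}{\varphi}|=2^{(k-n)/2}=2^{-s/2}$.

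The main obstacle is the centralizer step in the middle paragraph: justifying that an $I/Z$-Pauli operator which commutes with every element of $S(\ket{\varphi})$ must itself lie in $\pm S(\ket{\varphi})$. This is the statement that the stabilizer, being a maximal isotropic (Lagrangian) subspace of $\mathcal{G}_n$ modulo its center under the symplectic form induced by commutation, equals its own centralizer. Once that is invoked, everything else is either a bookkeeping count on the group intersection $H$ or a direct application of earlier results in the excerpt (Corollary~\ref{cor:stab_allzeros}, Theorems~\ref{th:gen_commute} and~\ref{th:stab_ortho}).
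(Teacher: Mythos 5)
Your proof is correct, but it takes a genuinely different route from the one in the paper. The paper's argument (following \cite{AaronGottes}) is circuit-theoretic: after mapping $\ket{\psi}$ to a basis state it Gauss-reduces the generators of the image of $\ket{\varphi}$ and asserts that each surviving generator containing an $X$/$Y$ literal contributes a factor of $1/\sqrt{2}$ to the overlap; this is exactly the computation that Algorithm~\ref{alg:inprod} later implements. You instead expand the rank-one projector $\ket{0^n}\!\bra{0^n}=2^{-n}\sum_T T$ over the $2^n$ $I$/$Z$ Pauli strings and reduce everything to the group-theoretic identity $|\braket{\psi}{\varphi}|^2=|S(\ket{\psi})\cap S(\ket{\varphi})|/2^n$, then convert the intersection dimension $k$ into the mismatch count $s=n-k$ via the basis-extension argument. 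Your version is arguably tighter: the paper never really justifies the ``one factor of $1/\sqrt{2}$ per $X$/$Y$ generator'' step, whereas your expectation-value dichotomy together with Theorem~\ref{th:stab_ortho} handles both the vanishing terms and the sign ambiguity explicitly, and your two-sided count of matched generators is the cleanest way to see that the minimum defining $s$ is attained and equals $n-k$. The one lemma you flag --- that an $I$/$Z$ operator commuting with all of $S(\ket{\varphi})$ lies in $\pm S(\ket{\varphi})$ --- is indeed not in the excerpt, but it closes in one line: the image of $S(\ket{\varphi})$ in $\mathcal{G}_n$ modulo the phase subgroup $\{\pm 1,\pm i\}$ is an $n$-dimensional isotropic subspace of the symplectic space ${\mathbb Z}_2^{2n}$, hence Lagrangian and equal to its own symplectic complement, and since your $T$ squares to $I$ the residual phase can only be $\pm 1$. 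What the paper's route buys in exchange is constructiveness: it is the proof one turns directly into the $O(n^3)$ procedure, while yours is the cleaner derivation of the value itself.
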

	\begin{proof}
		Since $\braket{\psi}{\varphi}$ is not affected by
		unitary transformations $U$,
		we choose a stabilizer circuit
		such that $U\ket{\psi} = \ket{b}$, where $\ket{b}$ is a basis state.
		For this state, select the stabilizer generators $\cM$ of the
		form $I\ldots IZI\ldots I$. Perform Gaussian elimination
		on $\cM$ to minimize the incidence of $P_i \neq Q_i$.
		Consider two cases. If $U\ket{\varphi} \neq \ket{b}$ and its generators
		contain only $I$/$Z$ literals, then $U\ket{\varphi}\perp U\ket{\psi}$,
		which contradicts the assumption that $\ket{\psi}$ and $\ket{\varphi}$
		are non-orthogonal. Otherwise, each generator of
		$U\ket{\varphi}$ containing $X$/$Y$ literals contributes 
		a factor of $1/\sqrt{2}$ to the inner product.
	\end{proof}
	
\ \\ \noindent
{\bf Synthesizing canonical circuits}.
A crucial step in the proof of Theorem \ref{th:inprod_aron} is the computation of a stabilizer
circuit that brings an $n$-qubit stabilizer state $\ket{\psi}$ to a computational basis state $\ket{b}$.
Consider a stabilizer matrix $\cM$ that uniquely identifies $\ket{\psi}$. $\cM$ is reduced 
to \normform\ (Definition~\ref{def:basis_form}) by applying a series of elementary row and column 
operations. Recall that row operations
(transposition and multiplication) do not modify the state, but column (Clifford) operations do.
Thus, the column operations involved in the reduction process constitute a unitary stabilizer
circuit $\cC$ such that $\cC\ket{\psi} = \ket{b}$, where $\ket{b}$ is a basis state. 
Algorithm~\ref{alg:inprod_circ} reduces an input stabilizer matrix $\cM$ to \normform\ 
and returns the circuit $\cC$ that performs such a mapping. 

	\begin{definition}
		Given a finite sequence of quantum gates,
		a {\em circuit template} describes a segmentation
		of the circuit into blocks where each block
		uses only one gate type.
		The blocks must correspond to the sequence and be
  		concatenated in that order. For example,
  		a circuit satisfying the $H$-$C$-$P$ template starts with
  		a block of Hadamard ($H$) gates, followed by a block of
  		CNOT ($C$) gates, followed by a block of Phase ($P$) gates.
	\end{definition}
	
	\begin{definition}
		A circuit with a {\em template structure} consisting
		entirely of CNOT, Hadamard and Phase blocks is called
		a {\em canonical stabilizer circuit}.
	\end{definition}
	
\newcommand{\zfield}{\mathbb{Z}}

Canonical forms are useful for synthesizing stabilizer circuits
that minimize the number of gates and qubits required to produce a
particular computation. This is particularly important in the context
of quantum fault-tolerant architectures that are based on stabilizer codes.
Given any stabilizer matrix, Algorithm~\ref{alg:inprod_circ} synthesizes a $5$-block
canonical circuit with template $H$-$C$-$CZ$-$P$-$H$ (Figure~\ref{fig:basiscirc}-a),
where the $CZ$ block consists of Controlled-$Z$ (CPHASE) gates. Such
gates are stabilizer gates since CPHASE$_{i,j} = $H$_j$CNOT$_{i,j}$H$_j$ 
(Figure~\ref{fig:basiscirc}-b). In our implementation, such gates are simulated 
directly on the stabilizer. The work in \cite{AaronGottes} establishes 
a longer $7$-block\footnote{Theorem 8 in \cite{AaronGottes}
actually describes an $11$-step canonical procedure. However, the last four steps
pertain to reducing destabilizer rows, which we do not consider in our approach.} 
$H$-$C$-$P$-$C$-$P$-$C$-$H$ canonical-circuit template. The existence of a $H$-$C$-$P$-$CZ$ template
is proven in \cite{Vanden} but no algorithms are known 
for obtaining such $4$-block canonical circuits given an arbitrary 
stabilizer state.

	\newcommand{\apply}{\mathrm{\tt CONJ}}
	\newcommand{\gauss}{\mathrm{\tt GAUSS}}
	
    \begin{algorithm}[!t]
         \caption{Synthesis of basis normalization circuit}
         \small
         \label{alg:inprod_circ}
         \begin{algorithmic}[1]
            \Require Stabilizer matrix $\cM$ for $S(\ket{\psi})$ with rows $R_1,\ldots,R_n$
            \Ensure ({\em i}) Unitary stabilizer circuit $\cC$ such that $\cC\ket{\psi}$
            equals basis state $\ket{b}$, and ({\em ii}) reduce $\cM$ to \normform 
            \Statex \hspace{-5mm} $\Rightarrow$ $\gauss(\cM)$ reduces $\cM$ to canonical form 
            (Figure \ref{fig:sminv})
            \Statex \hspace{-5mm} $\Rightarrow$ $\rswap(\cM, i, j)$ swaps rows $R_i$ and $R_j$ of $\cM$
            \Statex \hspace{-5mm} $\Rightarrow$ $\rmult(\cM, i, j)$ left-multiplies rows $R_i$ 
            and $R_j$, returns updated $R_i$
            \Statex \hspace{-5mm} $\Rightarrow$ $\apply(\cM, \alpha_j)$ conjugates $j^{th}$ 
            column of $\cM$ by Clifford sequence $\alpha$
            \Statex
            \State $\gauss(\cM)$ \Comment{Set $\cM$ to canonical form}
            \State $\cC \leftarrow \emptyset$   
            \State $i \leftarrow 1$
            \For{$j \in \{1, \dots, n\}$} 
            \Comment{Apply block of Hadamard gates}
            	\State $k \leftarrow$ index of row $R_{k \in \{i,\ldots, n\}}$ with $j^{th}$ literal set to $X$ or $Y$
            	\If{$k$ {\bf exists}}
            		\State $\rswap(\cM, i, k)$
            	\Else
            		\State $k_2 \leftarrow$ index of {\em last} row 
            		$R_{k_2 \in \{i,\ldots, n\}}$ with $j^{th}$ literal set to $Z$
            		\If{$k_2$ {\bf exists}}
            			\State $\rswap(\cM, i, k_2)$
            			\If{$R_{i}$ has $X$, $Y$ or $Z$ literals in columns $\{j+1, \ldots, n\}$}
            				\State $\apply(\cM, \text{H}_j)$
            				\State $\cC \leftarrow \cC \cup \text{H}_j$
            			\EndIf
           		\EndIf
           	\EndIf
           	\State $i \leftarrow i + 1$
            \EndFor
            \For{$j \in \{1, \dots, n\}$} \Comment{Apply block of CNOT gates}
            		 \For{$k \in \{j+1, \dots, n\}$} 
            		 	\If{$k^{th}$ literal of row $R_j$ is set to $X$ or $Y$}
            				\State $\apply(\cM, \text{CNOT}_{j, k})$
            				\State $\cC \leftarrow \cC \cup \text{CNOT}_{j, k}$
            			\EndIf
           		 \EndFor
            \EndFor
            \For{$j \in \{1, \dots, n\}$} 
            \Comment{Apply a block of Controlled-$Z$ gates (Figure~\ref{fig:basiscirc}b)}
            		\For{$k \in \{j+1, \dots, n\}$} 
            			\If{$k^{th}$ literal of row $R_j$ is set to $Z$}
            				\State $\apply(\cM, \text{CPHASE}_{j, k})$ 
            				\State $\cC \leftarrow \cC \cup \text{CPHASE}_{j, k}$
            			\EndIf
           		\EndFor
            \EndFor
            \For{$j \in \{1, \dots, n\}$} \Comment{Apply block of Phase gates}
            		\If{$j^{th}$ literal of row $R_j$ is set to $Y$}
            			\State $\apply(\cM, \text{P}_j)$
            			\State $\cC \leftarrow \cC \cup \text{P}_j$
            		\EndIf
            \EndFor
            \For{$j \in \{1, \dots, n\}$} \Comment{Apply block of Hadamard gates}
            		\If{$j^{th}$ literal of row $R_j$ is set to $X$}
            			\State $\apply(\cM, \text{H}_j)$
            			\State $\cC \leftarrow \cC \cup \text{H}_j$
            		\EndIf
            \EndFor
            \For{$j \in \{1, \dots, n\}$} 
            \Comment{Eliminate trailing $Z$ literals to ensure \normform\ (Definition~\ref{def:basis_form})}
            		\For{$k \in \{j+1, \dots, n\}$} 
            			\If{$j^{th}$ literal of row $R_k$ is set to $Z$}
            				\State $R_k = \rmult(\cM, R_j, R_k)$ %\Comment{Gauss-Jordan elimination step}
            			\EndIf
           		\EndFor
            \EndFor
            \State \Return $\cC$
        \end{algorithmic}
    \end{algorithm}
    
We now describe the main steps in Algorithm~\ref{alg:inprod_circ}.
For simplicity, the updates to the phase array under row and column operations will 
be left out of our discussion as such updates do not affect the overall execution 
of the algorithm.

	\begin{figure}[!t]
		\centering
		\begin{tabular}{cc}
		\scalebox{.8}[.8]{
			\begin{minipage}[b]{.4\linewidth}
				\input{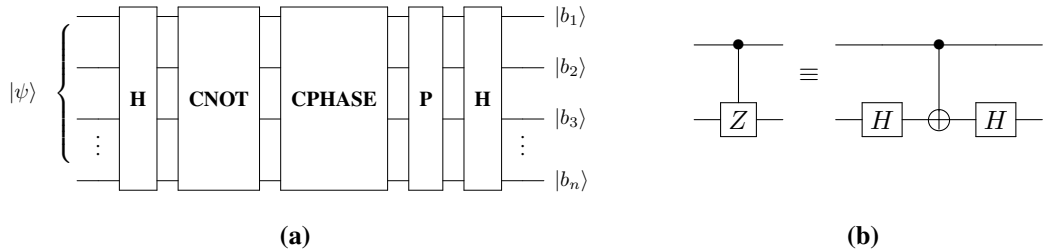}
			\end{minipage}
		} 
		& \hspace{50pt} 
		$
    		\Qcircuit @C=1.0em @R=1.0em {
    			& \\
    			& \ctrl{2} & \qw & & & \qw 	   & \ctrl{2} & \qw & \qw \\
    			& & & \equiv & & & \\
			& \gate{Z} & \qw & & & \gate{H}  & \targ    & \gate{H}  & \qw \\
   	 	}$
   	 	\\ \\
		{\bf (a)} & \hspace{50pt} {\bf (b) }
		\end{tabular}
		%\hspace{-1cm}
		%\raisebox{-20pt}{
		\parbox{.75\linewidth}{
		\caption{\label{fig:basiscirc}{\bf(a)} Template structure for the basis-normalization 
		circuit synthesized by Algorithm~\ref{alg:inprod_circ}. The input is an arbitrary
		stabilizer state $\ket{\psi}$ while the output is a basis state $\ket{b_1, \ldots, b_n}$,
		where $b_1, \ldots, b_n \in \{0,1\}^n$. {\bf (b)} Controlled-$Z$
		gates used in the {\bf CPHASE} block. CPHASE gates can be implemented directly or using
		the equivalence shown here. 		
		}}%}
		%\end{tabular}
		\vspace{-10pt}
    \end{figure}

	\begin{itemize}
		\item[{\bf 1.}] Reduce $\cM$ to canonical form.
		\item[{\bf 2.}] Use row transposition to diagonalize $\cM$. For $j \in \{1, \ldots, n\}$, 
		if the diagonal literal $\cM_{j, j} = Z$ and there are other Pauli (non-$I$) 
		literals in the row (qubit is entangled), conjugate $\cM$ by H$_j$. 
		Elements below the diagonal are $Z$/$I$ literals. 
		\item[{\bf 3.}] For each above-diagonal element $\cM_{j, k} = X$/$Y$, 
		conjugate by CNOT$_{j,k}$. Elements above the diagonal are now $I$/$Z$ literals.
		\item[{\bf 4.}] For each above-diagonal element $\cM_{j, k} = Z$, 
		conjugate by CPHASE$_{j,k}$. Elements above the diagonal are now $I$ literals.
		\item[{\bf 5.}] For each diagonal literal $\cM_{j, j} = Y$, conjugate by P$_j$.
		\item[{\bf 6.}] For each diagonal literal $\cM_{j, j} = X$, conjugate by H$_j$.
		\item[{\bf 7.}] Use row multiplication to eliminate trailing $Z$ literals below
		the diagonal and arrive at \normform.
	\end{itemize}
    
    %\vspace{10pt}
	\begin{proposition} \label{prop:normform_circsize}
		For an $n\times n$ stabilizer matrix $\cM$, the number of 
		gates in the circuit $\cC$ returned by Algorithm~\ref{alg:inprod_circ} 
		is $O(n^2)$.
	\end{proposition}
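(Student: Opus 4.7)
The plan is to simply add up the gate counts contributed by each block in Algorithm~\ref{alg:inprod_circ}. Since the algorithm is organized into six clearly-delimited phases corresponding to the $H$-$C$-$CZ$-$P$-$H$ template (plus an initial $\gauss(\cM)$ call and a final row-multiplication cleanup that add no gates), I would bound each phase independently.

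First I would observe that the initial call $\gauss(\cM)$ and the final trailing-$Z$ elimination loop consist purely of row operations on $\cM$ and do not append any gates to $\cC$, so they contribute $0$ to the gate count. Next, the two ``diagonal'' blocks — the leading Hadamard block (Step~2) and the final Phase and Hadamard blocks (Steps~5 and~6) — each iterate over $j \in \{1,\dots,n\}$ and append at most one gate per iteration (H$_j$, P$_j$, or H$_j$ respectively), contributing at most $3n$ gates in total. The remaining two blocks, CNOT (Step~3) and CPHASE (Step~4), use doubly-nested loops over pairs $(j,k)$ with $1 \le j < k \le n$, appending at most one CNOT$_{j,k}$ or CPHASE$_{j,k}$ per pair. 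Each such block therefore contributes at most $\binom{n}{2}$ gates.

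Summing these contributions yields a total gate count of at most $3n + 2\binom{n}{2} = n^2 + 2n = O(n^2)$, which establishes the claim. The argument is essentially bookkeeping over the algorithm's block structure, so no subtle step arises — the only thing worth emphasizing is that every gate appended to $\cC$ throughout the algorithm is attributable to exactly one of these six phases, and the intermediate Gaussian-reduction-style row manipulations do not contribute gates. In particular, since the algorithm's purpose is to build the block-structured circuit of Figure~\ref{fig:basiscirc}(a), the $O(n^2)$ bound is dominated by the two ``interaction'' blocks (CNOT and CPHASE) that can have an entry for every ordered pair of qubits, while the three diagonal blocks contribute only a linear overhead.
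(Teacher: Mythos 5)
Your proof is correct and follows essentially the same route as the paper's: the paper simply notes that the gate count is dominated by the CNOT and CPHASE blocks, each with $O(n^2)$ gates, while you make the bookkeeping explicit with the bound $3n + 2\binom{n}{2} = O(n^2)$. The only quibble is the phrase ``six clearly-delimited phases'' for the five-block $H$-$C$-$CZ$-$P$-$H$ template, which is a harmless miscount that does not affect the argument.
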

	\begin{proof}
		The number of gates in $\cC$ is dominated by the CNOT and CPHASE blocks, 
		which have $O(n^2)$ gates each. This agrees with previous
		results regarding the number of gates needed for an $n$-qubit 
		stabilizer circuit in the worst case \cite{Cleve, DehaeDemoor}.
	\end{proof} 

Observe that, for each gate added to $\cC$, the corresponding
column operation is applied to $\cM$. Therefore, since column 
operations run in $\Theta(n)$ time, it follows from Proposition~\ref{prop:normform_circsize}
that the runtime of Algorithm \ref{alg:inprod_circ} is $O(n^3)$. 

Canonical stabilizer circuits that follow the $7$-block template structure from \cite{AaronGottes}
can be optimized to obtain a tighter bound on the number of gates. As in 
our approach, such circuits are dominated by the size of the CNOT blocks,
which contain $O(n^2)$ gates. The work in \cite{PatelMarkov} shows that 
that any CNOT circuit has an equivalent CNOT circuit with only $O(n^2/\log n)$ gates.
Thus, one simply applies such techniques to each of the CNOT blocks in the
canonical circuit. It is an open problem whether one can apply the techniques 
from \cite{PatelMarkov} directly to CPHASE blocks, which would 
facilitate similar optimizations to our proposed $5$-block canonical form.  

\newcommand{\norm}{\mathrm{\tt BASISNORMCIRC}}
\newcommand{\lmult}{\mathrm{\tt LEFTMULT}}
	
        \begin{algorithm}[!t]
            \caption{Inner product for stabilizer states}
            \label{alg:inprod} \small
            \begin{algorithmic}[1]
                \Require Stabilizer matrices ({\em i}) $\cM^\psi$ for $\ket{\psi}$ with rows $P_1,\ldots,P_n$,
                and ({\em ii}) $\cM^\phi$ for $\ket{\phi}$ with rows $Q_1,\ldots,Q_n$
                \Ensure Inner product between $\ket{\psi}$ and $\ket{\phi}$
                \Statex \hspace{-5mm} $\Rightarrow$ $\norm(\cM)$ reduces $\cM$ to \normform, 
                i.e, $\cC\ket{\psi}=\ket{b}$, where $\ket{b}$ is a basis state, and returns $\cC$
                \Statex \hspace{-5mm} $\Rightarrow$  $\apply(\cM, \cC)$ conjugates $\cM$ by Clifford circuit $\cC$
                \Statex \hspace{-5mm} $\Rightarrow$ $\gauss(\cM)$ reduces $\cM$ to canonical form 
                (Figure \ref{fig:sminv})
                \Statex \hspace{-5mm} $\Rightarrow$ $\lmult(P, Q)$ left-multiplies Pauli operators $P$ and $Q$, 
                and returns the updated $Q$
                \Statex
                \State $\cC \leftarrow \norm(\cM^\psi)$
                \Comment{Apply Algorithm \ref{alg:inprod_circ} to $\cM^\psi$}  
                %\FOR{{\bf each} gate $g_i \in \cC$}
                	\State $\apply(\cM^\phi, \cC)$  \Comment{Compute $\cC\ket{\phi}$}
                %\ENDFOR
                \State $\gauss(\cM^\phi)$ \Comment{Set $\cM^\phi$ to canonical form}
                \State $k \leftarrow 0$
                \For{{\bf each} row $Q_i \in \cM^\phi$}
                	\If{$Q_i$ contains $X$ or $Y$ literals}
	              		\State $k \leftarrow k + 1$
	              	\Else\Comment{Check orthogonality, i.e., $Q_i \notin S(\ket{b})$.}
	              		\State $R \leftarrow I^{\otimes n}$
	              		\For{{\bf each} $Z$ literal in $Q_i$ found at position $j$}
	              			%\Statex \hspace{12mm} $\Rightarrow$ Note: $\cM^\psi$ is in \normform\ 
	              			%following step 1
	              			\State $R \leftarrow \lmult(P_j, R)$
	              			%\Comment{Left-multiply row $P_j$ in $\cM^\psi$ and $R$.}
		              	\EndFor
		              	\If{$R = -Q_i$}            	
		              		\State\Return 0 \Comment{By Theorem \ref{th:stab_ortho}}		  
		              	\EndIf
	              	\EndIf
                \EndFor
                \State\Return $2^{-k/2}$ \Comment{By Theorem \ref{th:inprod_aron}}
            \end{algorithmic}
        \end{algorithm}
	
\ \\ \noindent
{\bf Inner-product algorithm}.
Let $\ket{\psi}$ and $\ket{\phi}$ be two stabilizer states represented by stabilizer matrices
$\cM^\psi$ and $\cM^\phi$, respectively. Our approach for computing the inner product between these two
states is shown in Algorithm~\ref{alg:inprod}. Following the proof of Theorem \ref{th:inprod_aron},
Algorithm \ref{alg:inprod_circ} is applied to $\cM^\psi$ in order to reduce it to \normform.
The stabilizer circuit generated by Algorithm~\ref{alg:inprod_circ} is then applied
to $\cM^\phi$ in order to preserve the inner product. Then, we minimize the number of $X$ and $Y$ literals
in $\cM^\phi$ by applying Algorithm~\ref{alg:gauss_min}. Finally,
each generator in $\cM^\phi$ that anticommutes with $\cM^\psi$ (since $\cM^\psi$ is in \normform, we only
need to check which generators in $\cM^\phi$ have $X$ or $Y$ literals) contributes a factor of $1/\sqrt{2}$ 
to the inner product. If a generator in $\cM^\phi$, say $Q_i$, commutes with $\cM^\psi$, then we
check orthogonality by determining whether $Q_i$ is in the stabilizer group generated
by $\cM^\psi$. This is accomplished by multiplying the appropriate generators in $\cM^\psi$ such that
we create Pauli operator $R$, which has the same literals as $Q_i$,
and check whether $R$ has an opposite sign to $Q_i$. If this is the case, then, by
Theorem \ref{th:stab_ortho}, the states are orthogonal. 
Clearly, the most time-consuming step of Algorithm \ref{alg:inprod} is
the call to Algorithm \ref{alg:inprod_circ}, therefore, the overall runtime is $O(n^3)$.
However, as we show in Section \ref{sec:validate}, the performance of 
our algorithm depends strongly on the stabilizer matrices considered
and exhibits quadratic behaviour for certain stabilizer states. 

\section{Empirical Validation} \label{sec:validate}

We implemented our algorithms in C++ and designed a benchmark set 
to validate the performance of our inner-product algorithm. Recall 
that the runtime of Algorithm~\ref{alg:inprod_circ} is dominated by
the two nested for-loops (lines 20-35). 
The number of times these loops execute depends on the amount of 
entanglement in the input stabilizer state. In turn, the number of entangled 
qubits depends on the the number of CNOT gates in the 
circuit $\cC$ used to generate the stabilizer state $\cC\ket{0^{\otimes n}}$
(Theorem \ref{th:stabst}).
By a simple heuristic argument \cite{AaronGottes},
one generates highly entangled stabilizer states as long as the 
number of CNOT gates in $\cC$ is proportional to $n\lg n$. Therefore, we generated 
random $n$-qubit stabilizer circuits for $n\in\{20, 40, \ldots, 500\}$ as follows: 
fix a parameter $\beta > 0$; then choose $\beta \lceil n \log_2 n\rceil$ 
unitary gates (CNOT, Phase or Hadamard) each with probability $1/3$.
Then, each random $\cC$ is applied to the $\ket{00\ldots0}$ basis state to generate
random stabilizer matrices (states). The use of randomly generated benchmarks is 
justified for our experiments because (\emph{i}) our algorithms are not explicitly 
sensitive to circuit topology and (\emph{ii}) random stabilizer circuits are 
considered representative \cite{Knill}.
For each $n$, we applied Algorithm \ref{alg:inprod} to pairs of random
stabilizer matrices and measured the number of seconds needed to
compute the inner product. The entire procedure was repeated for
increasing degrees of entanglement by ranging $\beta$
from $0.6$ to $1.2$ in increments of $0.1$. Our results
are shown in Figure~\ref{fig:stabip}-a.

\begin{figure}[!b]
	\centering%\footnotesize
	\begin{tabular}{cc} 
		\includegraphics[scale=.55]{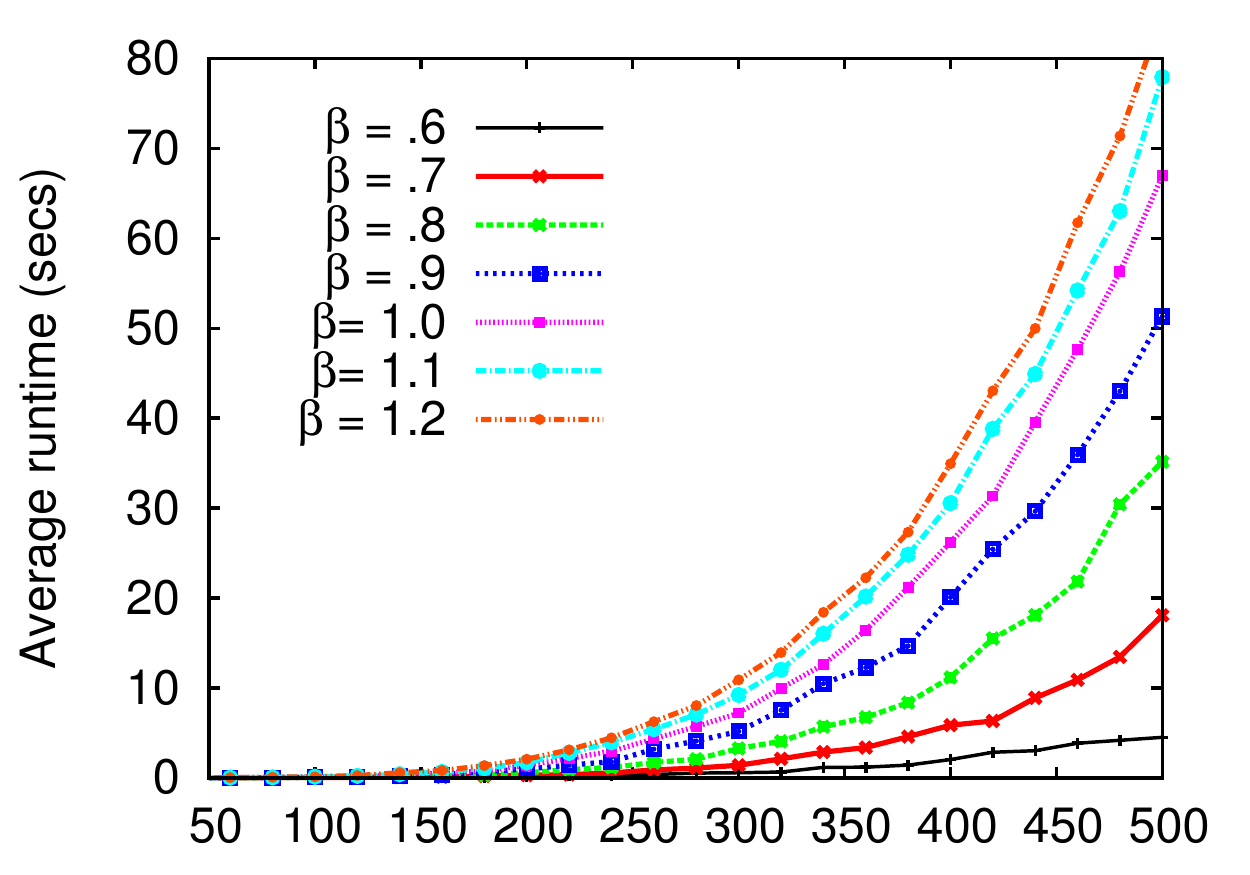} 
		&
		\includegraphics[scale=.55]{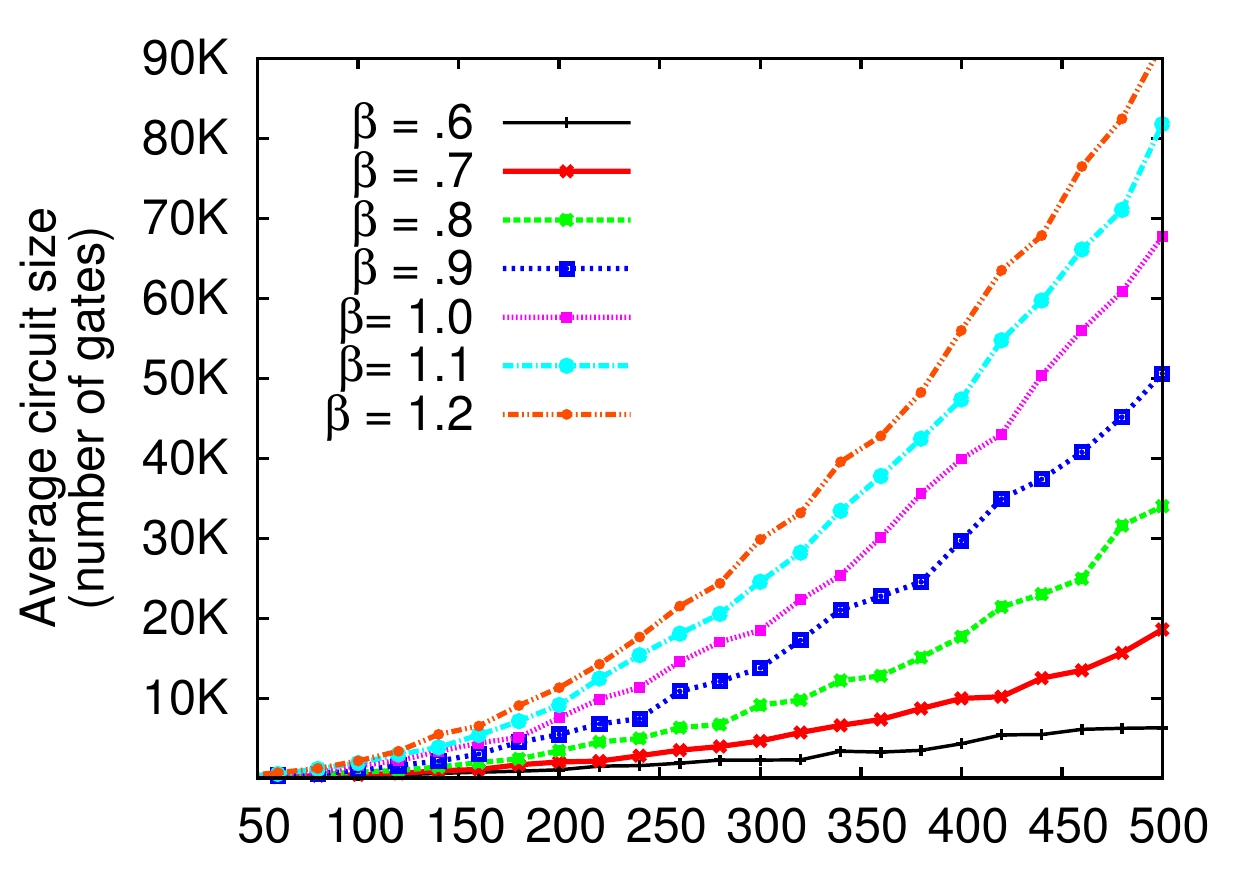} \\
		\multicolumn{2}{c}{Number of qubits} \\
		\hspace{1cm} {\bf (a)} & {\bf (b)} 
	\end{tabular}
	%&
	%\raisebox{70pt}{\parbox{.45\linewidth}{
	%\vspace{-8pt}
	\parbox{.75\linewidth}{
	\caption{\label{fig:stabip} Average runtime for Algorithm \ref{alg:inprod}
	to compute the inner product between two random $n$-qubit stabilizer states. The stabilizer
	matrices that represent the input states are generated by applying $\beta n\log_2 n$
	unitary gates to $\ket{0^{\otimes n}}$. 
	}}
	%\end{tabular}
\end{figure}

The runtime of Algorithm~\ref{alg:inprod} 
appears to grow quadratically in $n$ when $\beta = 0.6$. However, 
when the number of unitary gates is doubled
($\beta = 1.2$), the runtime exhibits cubic growth.
Therefore, Figure~\ref{fig:stabip}-a shows that 
the performance of Algorithm~\ref{alg:inprod} is highly 
dependent on the degree of entanglement in the input
stabilizer states. Figure~\ref{fig:stabip}-b shows the
average size of the basis-normalization circuit returned
by the calls to Algorithm~\ref{alg:inprod_circ}. As expected
(Proposition~\ref{prop:normform_circsize}), the size of the
circuit grows quadratically in $n$.
Figure~\ref{fig:stabip_ghz0all} shows the average runtime for 
Algorithm \ref{alg:inprod} to compute the inner 
product between: (\emph{i}) the all-zeros basis state        
and random $n$-qubit stabilizer
states, and (\emph{ii}) the $n$-qubit GHZ
state\footnote{An $n$-qubit 
{\em GHZ state} is an equal superposition of the all-zeros and 
all-ones states, i.e., $\frac{\ket{0^{\otimes n}} + \ket{1^{\otimes n}}}{\sqrt{2}}$.} 
and random stabilizer states. GHZ states are maximally entangled
states that have been realized experimentally using several quantum technologies
and are often encountered in practical applications such as error-correcting 
codes and fault-tolerant architectures. Figure \ref{fig:stabip_ghz0all}
shows that, for such practical instances, Algorithm \ref{alg:inprod} 
can compute the inner product in roughly $O(n^2)$ time (e.g. $\braket{GHZ}{0}$). 
However, without apriori information about the input stabilizer matrices, 
one can only say that the performance of Algorithm \ref{alg:inprod} will 
be somewhere between quadratic and cubic in $n$.

\section{Nearest-neighbor stabilizer states} \label{sec:stabneighbors}

\begin{figure}[!t]
	\centering%\footnotesize
	\begin{tabular}{cc}
		%\hspace{-4mm}
		%\rotatebox{90}{\hspace{12mm}Runtime (secs)} & 
		%\hspace{-.5cm} 
		\includegraphics[scale=.55]{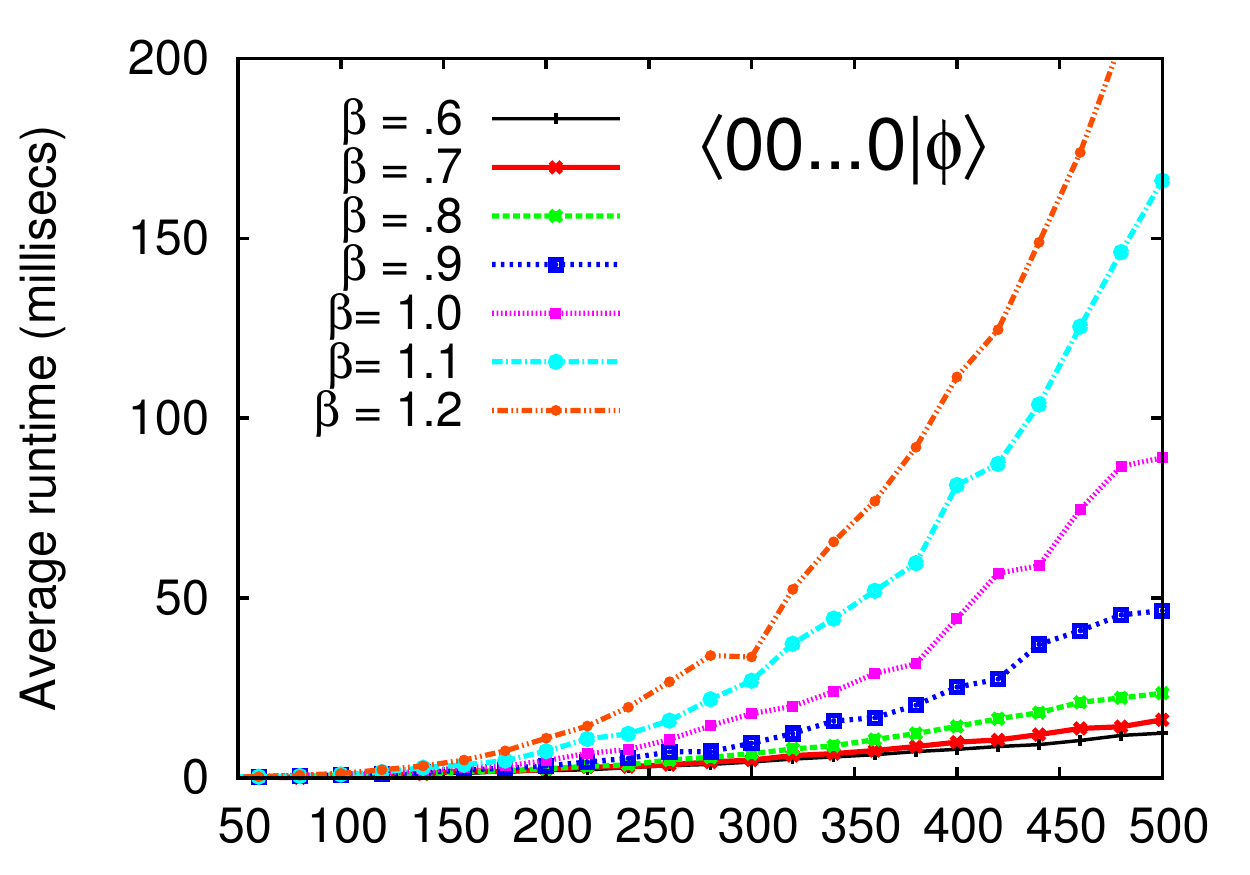} &
		%\hspace{-.8cm} 
		\includegraphics[scale=.55]{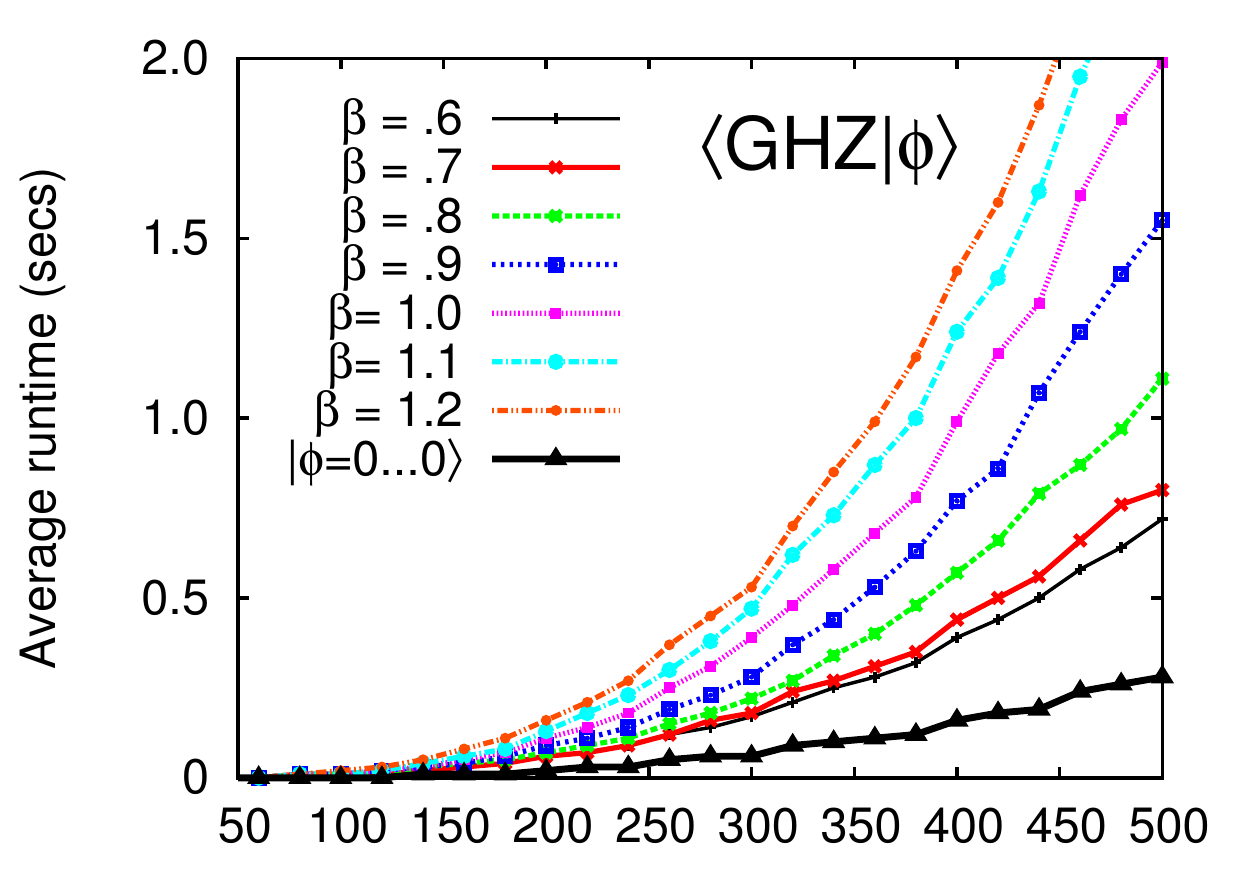} \\
		\multicolumn{2}{c}{Number of qubits} \\
		{\bf (a)} & {\bf (b)} 
	\end{tabular}
	\parbox{.75\linewidth}{
	\caption{\label{fig:stabip_ghz0all} Average runtime for Algorithm \ref{alg:inprod}
	to compute the inner product between {\bf (a)} $\ket{0^{\otimes n}}$ and random 
	stabilizer state $\ket{\phi}$ and {\bf (b)} the $n$-qubit GHZ state and random 
	stabilizer state $\ket{\phi}$.}
	}
	\vspace{-10pt}
\end{figure}

We used Algorithm~\ref{alg:inprod} to compute the inner product between 
$\ket{00}$ and all two-qubit stabilizer states. Our results
are shown in Table~\ref{tab:two_qbssts}. We leveraged these 
results to formulate the following properties related to the 
geometry of stabilizer states.

    \begin{definition}
		Given an arbitrary state $\ket{\psi}$ with $||\psi|| = 1$, a stabilizer state
		$\ket{\varphi}$ is a nearest stabilizer state
		to $\ket{\psi}$ if $|\braket{\psi}{\varphi}|$ attains the largest possible value $\neq 1$.
	\end{definition}

	\begin{proposition}
		Consider two orthogonal stabilizer states $\ket{\alpha}$ and $\ket{\beta}$
		whose unbiased superposition $\ket{\psi}$
		is also a stabilizer state. Then $\ket{\psi}$ is a
		nearest stabilizer state to $\ket{\alpha}$ and $\ket{\beta}$.
	\end{proposition}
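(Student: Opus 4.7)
The plan is to combine the unbiased-superposition hypothesis with the discrete spectrum of possible inner-product magnitudes that Theorem~\ref{th:inprod_aron} forces on pairs of stabilizer states. That theorem tells us that whenever two stabilizer states $\ket{\psi},\ket{\varphi}$ are not orthogonal, $|\braket{\psi}{\varphi}| = 2^{-s/2}$ for some integer $s\geq 0$. The only value exceeding every other one is $2^{0}=1$ (which forces $\ket{\psi}$ and $\ket{\varphi}$ to coincide up to a global phase, since each stabilizer state is uniquely determined by its stabilizer group), and the next largest attainable value is $2^{-1/2}=1/\sqrt{2}$. Hence to show that $\ket{\psi}$ is a nearest stabilizer state to $\ket{\alpha}$ (and likewise to $\ket{\beta}$), it suffices to show $|\braket{\psi}{\alpha}| = 1/\sqrt{2}$.

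First I would unpack the hypothesis. Because $\ket{\alpha}\perp\ket{\beta}$ and both are unit vectors, any unit-norm superposition can be written $\ket{\psi} = \lambda\ket{\alpha} + \mu\ket{\beta}$ with $|\lambda|^2+|\mu|^2 = 1$. The \emph{unbiased} requirement (as defined in the footnote on biased/unbiased states) applied in the orthonormal basis $\{\ket{\alpha},\ket{\beta}\}$ spanning the two-dimensional subspace containing $\ket{\psi}$ forces $|\lambda|^2 = |\mu|^2 = 1/2$. Therefore
\begin{equation*}
|\braket{\psi}{\alpha}| = |\lambda| = \tfrac{1}{\sqrt{2}}, \qquad |\braket{\psi}{\beta}| = |\mu| = \tfrac{1}{\sqrt{2}}.
\end{equation*}

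Next I would invoke Theorem~\ref{th:inprod_aron} on the pair $(\ket{\psi},\ket{\alpha})$, which is non-orthogonal by the calculation above. The theorem gives $|\braket{\psi}{\alpha}| \in \{2^{-s/2} : s\in\mathbb{Z}_{\geq 0}\}$, and we have just computed that this value equals $2^{-1/2}$, which is the maximum element of that set that is strictly less than $1$. By the definition of nearest stabilizer state, this is precisely the condition required. The same argument applied to $(\ket{\psi},\ket{\beta})$ handles the second claim.

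The proof is essentially a one-line observation, so I do not anticipate a substantive obstacle; the only thing to be careful about is confirming that the notion of ``unbiased'' used in the statement really does pin down $|\lambda|=|\mu|=1/\sqrt{2}$ as opposed to some weaker condition, and that we are free to exclude the $s=0$ case because $\ket{\psi}\neq\ket{\alpha}$ (which follows immediately from $|\braket{\psi}{\alpha}|<1$, or equivalently from $\mu\neq 0$). No case analysis on the relative phase between $\ket{\alpha}$ and $\ket{\beta}$ is needed, since only magnitudes appear in the definition of nearest neighbor.
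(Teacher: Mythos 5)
Your proof is correct and follows essentially the same route as the paper's: read off $|\braket{\psi}{\alpha}|=|\braket{\psi}{\beta}|=1/\sqrt{2}$ from the unbiasedness of the superposition, then invoke Theorem~\ref{th:inprod_aron} to conclude that $2^{-1/2}$ is the largest attainable inner-product magnitude different from $1$. Your version merely spells out the orthonormal decomposition and the exclusion of the $s=0$ case more explicitly than the paper does.
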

	\begin{proof}
		Since stabilizer states are unbiased, $|\braket{\psi}{\alpha}| =
		|\braket{\psi}{\beta}| = \frac{1}{\sqrt{2}}$. By Theorem \ref{th:inprod_aron},
		this is the largest possible value $\neq 1$. Thus, $\ket{\psi}$
		is a nearest stabilizer state to $\ket{\alpha}$ and $\ket{\beta}$.
	\end{proof}
	
	\begin{lemma} \label{lem:num_stabsts}
		For any two stabilizer states, the numbers of nearest-neighbor
        stabilizer states are equal.
  	\end{lemma}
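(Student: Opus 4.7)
The plan is to exploit the transitive action of the Clifford (unitary stabilizer) group on the set of $n$-qubit stabilizer states in order to exhibit an explicit bijection between the nearest-neighbor sets of any two stabilizer states. Concretely, given stabilizer states $\ket{\psi_1}$ and $\ket{\psi_2}$, I will construct a unitary stabilizer circuit $U$ with $U\ket{\psi_1} = \ket{\psi_2}$, and then show that $\ket{\varphi} \mapsto U\ket{\varphi}$ is a bijection between the nearest-neighbor stabilizer states of $\ket{\psi_1}$ and those of $\ket{\psi_2}$.

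First I would invoke Theorem~\ref{th:stabst} twice to obtain unitary stabilizer circuits $\cC_1$ and $\cC_2$ with $\cC_i\ket{0^{\otimes n}} = \ket{\psi_i}$. By Corollary~\ref{cor:stab_allzeros}, $\cC_1^{-1}$ is also a unitary stabilizer circuit, so $U := \cC_2\cC_1^{-1}$ is a unitary stabilizer circuit sending $\ket{\psi_1}$ to $\ket{\psi_2}$.

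Next I would use two standard facts. (\emph{a}) Unitary operators preserve moduli of inner products: for any stabilizer state $\ket{\varphi}$,
\begin{equation*}
|\braket{\psi_2}{U\varphi}| = |\bra{\psi_1}U^\dagger U\ket{\varphi}| = |\braket{\psi_1}{\varphi}|.
\end{equation*}
(\emph{b}) Stabilizer circuits map stabilizer states to stabilizer states: if $S(\ket{\varphi})$ is the stabilizer group of $\ket{\varphi}$, then $US(\ket{\varphi})U^\dagger$ stabilizes $U\ket{\varphi}$ and is again a subgroup of $\mathcal{G}_n$ of order $2^n$, so $U\ket{\varphi}$ is a stabilizer state.

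Combining (\emph{a}) and (\emph{b}), the map $\ket{\varphi} \mapsto U\ket{\varphi}$ sends a nearest-neighbor stabilizer state of $\ket{\psi_1}$ to a nearest-neighbor stabilizer state of $\ket{\psi_2}$, because it preserves the value of the inner-product modulus (so in particular it preserves the property of attaining the maximum value $\neq 1$ guaranteed by Theorem~\ref{th:inprod_aron}). The same argument applied to $U^{-1} = \cC_1\cC_2^{-1}$ gives an inverse map, so the assignment is a bijection. This equality of cardinalities establishes the lemma.

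The only delicate point is verifying that ``nearest neighbor'' is genuinely preserved by $U$: because the largest attainable value of $|\braket{\psi}{\varphi}|$ among stabilizer states $\ket{\varphi} \neq$ (a phase multiple of) $\ket{\psi}$ is an intrinsic quantity of the ambient stabilizer set (it equals $1/\sqrt{2}$ by Theorem~\ref{th:inprod_aron}, since $s=1$ is the smallest nonzero value), and since $U$ is a bijection on the set of stabilizer states that preserves inner-product moduli, the maximum is the same at both endpoints and the neighbor sets are in bijection. No computation of $4(2^n-1)$ is required here; that enumeration is the content of the subsequent result.
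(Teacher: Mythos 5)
Your proposal is correct and follows essentially the same route as the paper: both use Corollary~\ref{cor:stab_allzeros} (equivalently, Theorem~\ref{th:stabst}) to obtain a Clifford unitary mapping one stabilizer state to the other, and both conclude from unitarity that inner products, and hence the nearest-neighbor sets, are preserved. You merely spell out the details (the explicit bijection, the invariance of the stabilizer-state property under conjugation, and the inverse map) that the paper's two-sentence proof leaves implicit.
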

  	\begin{proof}
		By Corollary \ref{cor:stab_allzeros}, 
		any stabilizer state can be mapped to another stabilizer state by a
		stabilizer circuit. Since the operators effected by these circuits 
		are unitary, inner products are preserved.
	\end{proof}
	
	\begin{lemma}\label{lem:cross}%{\em\bf \cite{Cross}}
        Let $|\psi\rangle$ and $|\varphi\rangle$ be orthogonal stabilizer states such
        that $|\varphi\rangle=P|\psi\rangle$ where $P$ is an element of the Pauli group.
        Then $\frac{|\psi\rangle+|\varphi\rangle}{\sqrt{2}}$ is a stabilizer state.
    \end{lemma}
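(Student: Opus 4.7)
The plan is to exhibit $n$ independent commuting Hermitian Pauli operators that stabilize $|\chi\rangle := \frac{|\psi\rangle+|\varphi\rangle}{\sqrt{2}}$; by Theorem~\ref{th:gen_commute} this will show $|\chi\rangle$ is a stabilizer state. I would first observe that $P$ must anticommute with at least one element of $S(|\psi\rangle)$: otherwise $P$ commutes with every generator, so $|\varphi\rangle = P|\psi\rangle$ is stabilized by the entire group $S(|\psi\rangle)$, which by the uniqueness clause of Theorem~\ref{th:gen_commute} forces $|\varphi\rangle$ to be a global-phase multiple of $|\psi\rangle$, contradicting $|\psi\rangle \perp |\varphi\rangle$.

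Using the standard trick of multiplying pairs of anticommuting generators, I would then choose a generating set $g_1,\ldots,g_n$ of $S(|\psi\rangle)$ such that $g_1$ alone anticommutes with $P$ while $g_2,\ldots,g_n$ all commute with $P$. A direct computation then shows each $g_i$ with $i \geq 2$ stabilizes $|\chi\rangle$: since $[g_i,P]=0$ and $g_i|\psi\rangle = |\psi\rangle$, $g_i|\chi\rangle = \frac{g_i|\psi\rangle + g_i P|\psi\rangle}{\sqrt{2}} = \frac{|\psi\rangle + P g_i|\psi\rangle}{\sqrt{2}} = |\chi\rangle$.

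It remains to produce the $n$-th generator, and the argument splits on whether $P^2 = I$ or $P^2 = -I$ (the only possibilities for $P \in \mathcal{G}_n$). If $P^2 = I$ then $P$ is Hermitian and $P|\chi\rangle = \frac{P|\psi\rangle + P^2|\psi\rangle}{\sqrt{2}} = |\chi\rangle$, so $P$ itself serves as the extra generator. If instead $P^2 = -I$, then $P$ is anti-Hermitian with eigenvalues $\pm i$ and cannot stabilize any state on its own, so I would take $R := Pg_1$. Anticommutation of $P$ and $g_1$ gives $R^\dagger = -g_1 P = Pg_1 = R$ and $R^2 = -P^2 g_1^2 = I$, so $R$ is a Hermitian Pauli; moreover $Pg_1P = -P^2 g_1 = g_1$, which yields $R|\chi\rangle = \frac{Pg_1|\psi\rangle + Pg_1 P|\psi\rangle}{\sqrt{2}} = \frac{P|\psi\rangle + |\psi\rangle}{\sqrt{2}} = |\chi\rangle$.

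Finally I would verify independence of $\{R, g_2,\ldots,g_n\}$. Pairwise commutativity is immediate from the construction (each $g_i$ for $i \geq 2$ commutes with both $P$ and $g_1$), and $R$ anticommutes with $g_1$ in both cases (directly when $R = P$, and when $R = Pg_1$ via $g_1 R = -P = -R g_1$), so $R$ cannot be written as a product of the $g_2,\ldots,g_n$ (which all centralize $g_1$). Theorem~\ref{th:gen_commute} then identifies these $n$ commuting independent Hermitian generators as defining a unique stabilizer state, which must coincide with the unit vector $|\chi\rangle$. The main obstacle is the $P^2 = -I$ case: the naive candidate $P$ fails to stabilize the superposition, and one must recognize that the Hermitian product $Pg_1$ is the correct replacement.
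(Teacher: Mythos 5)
Your proof is correct, and it takes a genuinely different route from the paper's. The paper proves the lemma by invoking Corollary~\ref{cor:stab_allzeros} to obtain a Clifford operator $L$ with $L\ket{\psi}=\ket{0}^{\otimes n}$, noting that $L\ket{\varphi}=(LPL^\dag)L\ket{\psi}=i^t\ket{f(1)\dots f(n)}$ is a phase times a basis state, and then recognizing the superposition as $L^\dag$ applied to a tensor product of an all-zeros state and a GHZ state, whose stabilizer group it writes down explicitly and conjugates back by $L$. You instead never leave the original frame: you build a stabilizer group for $\frac{\ket{\psi}+\ket{\varphi}}{\sqrt 2}$ directly out of the generators of $S(\ket{\psi})$ that commute with $P$, together with one extra Hermitian operator ($P$ itself when $P^2=I$, and $Pg_1$ when $P^2=-I$), and all your computations check out, including Hermiticity and involutivity of $Pg_1$, the stabilization identities, and the independence argument via anticommutation with $g_1$ (and the group cannot contain $-I$ since every generator fixes the nonzero vector $\frac{\ket{\psi}+\ket{\varphi}}{\sqrt 2}$, which has norm one by orthogonality). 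Your approach is more elementary and self-contained, avoiding Clifford-circuit synthesis entirely, and it surfaces explicitly the phase subtlety that the paper absorbs into the $i^t$ factor and the $t\bmod 2$ case split in its definition of $S_{ghz}$ (the $XX\cdots X$ versus $YY\cdots Y$ generator). What the paper's route buys in exchange is an explicit normal form for the resulting state --- a product of $\ket{0}$'s and a GHZ state, up to the Clifford $L$ --- which it reuses in the proof of the nearest-neighbor counting theorem that follows; your construction yields the stabilizer group but not that structural picture.
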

    \begin{proof}
        Suppose $|\psi\rangle=\langle g_k\rangle_{k=1,2,\dots,n}$
        is generated by elements $g_k$ of the $n$-qubit Pauli group.
        Let %\vspace{-4pt}
        	\[
        		f(k)=\left\{\begin{array}{cl} 0 & \textrm{if}\ [P,g_k]=0 \\
        									  1 & \textrm{otherwise}\end{array}\right.
      			\vspace{-4pt}
        	\]
        
        \noindent
        and write $|\varphi\rangle=\langle (-1)^{f(k)}g_k\rangle$.
        Conjugating each generator $g_k$ by $P$ we see that $\ket{\varphi}$
        is stabilized by $\langle (-1)^{f(k)} g_k\rangle$.
        Let $Z_k$ (respectively $X_k$) denote the Pauli operator $Z$ ($X$) acting
        on the $k^{th}$ qubit. By Corollary \ref{cor:stab_allzeros}, there exists
        an element $L$ of the $n$-qubit Clifford group such that $L|\psi\rangle=\ket{0}^{\otimes n}$
        and $L|\varphi\rangle=(LPL^\dag)L|\psi\rangle=i^t|f(1)f(2)\dots f(n)\rangle$.
        The second equality follows from the fact that $LPL^\dag$ is an element of
        the Pauli group and can therefore be written as $i^tX(v)Z(u)$ for some
        $t \in \{0,1,2,3\}$ and $u,v \in {\mathbb Z}_2^k$. Therefore,
        	\vspace{-6pt}
        	\begin{equation*}
		        \frac{|\psi\rangle+|\varphi\rangle}{\sqrt{2}}= \frac{L^\dag(\ket{0}^{\otimes n}
        		+ i^t\ket{f(1)f(2)\ldots f(n)})}{\sqrt{2}}
        	\end{equation*}
        The state in parenthesis on the right-hand side is the product of an
        all-zeros state and a GHZ state. Therefore, the sum is stabilized by 
        $S' = L^\dag\langle S_{zero}, S_{ghz}\rangle L$
        where $S_{zero} = \langle Z_i, i \in \{k|f(k)=0\}\rangle$ and $S_{ghz}$
        is supported on $\{k|f(k)=1\}$ and equals $\langle(-1)^{t/2}XX\ldots X,\forall i\ Z_iZ_{i+1}\rangle$
        if $t = 0\mod 2$ or $\langle(-1)^{(t-1)/2}YY\ldots Y,\forall i\ Z_iZ_{i+1}\rangle$
        if $t = 1\mod 2$.
    \end{proof}
	
	\begin{theorem}
        For any $n$-qubit stabilizer state $\ket{\psi}$, there are $4(2^n - 1)$
        nearest-neighbor stabilizer states, and these states can be produced as
        described in Lemma \ref{lem:cross}.
	\end{theorem}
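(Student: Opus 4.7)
The plan is to reduce to the case $\ket{\psi}=\ket{0}^{\otimes n}$ via Lemma~\ref{lem:num_stabsts} (Cliffords preserve inner products, hence preserve nearest-neighbor relations), enumerate the nearest neighbors of $\ket{0^n}$, and then transport the enumeration back. By Theorem~\ref{th:inprod_aron}, for non-orthogonal stabilizer states $|\braket{\psi}{\varphi}|=2^{-s/2}$ for some integer $s\ge 0$, so the largest magnitude strictly less than $1$ is $1/\sqrt{2}$, attained exactly when $s=1$. Thus the nearest neighbors of $\ket{0^n}$ are precisely the stabilizer states $\ket{\varphi}$ with $|\braket{0^n}{\varphi}|=1/\sqrt{2}$.

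For the upper bound, I would use unbiasedness of stabilizer amplitudes: this forces such a $\ket{\varphi}$, after normalizing its global phase, to have exactly two nonzero computational-basis amplitudes, giving $\ket{\varphi}=(\ket{0^n}+c\ket{b})/\sqrt{2}$ with $b\neq 0^n$ and $|c|=1$. To pin down $c$, I would build a Clifford $\cC$ that fixes $\ket{0^n}$ and sends $\ket{b}$ to $\ket{10\ldots 0}$: pick the least index $i_1$ with $b_{i_1}=1$, apply $\mathrm{CNOT}_{i_1,j}$ for every other $j$ with $b_j=1$, then SWAP $i_1$ to position~$1$. The image $\cC\ket{\varphi}=\bigl((\ket{0}+c\ket{1})/\sqrt{2}\bigr)\otimes\ket{0}^{\otimes(n-1)}$ is again a stabilizer state; its stabilizer contains the $n-1$ independent elements $Z_2,\ldots,Z_n$, and the remaining independent generator (which commutes with them and cannot be a product of $Z$'s without collapsing the first qubit to $\ket{0}$) must act as $\pm X$ or $\pm Y$ on qubit~$1$ and as $I$ or $Z$ elsewhere. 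Applying this generator forces $(\ket{0}+c\ket{1})/\sqrt{2}$ to be a single-qubit stabilizer state, so $c\in\{1,-1,i,-i\}$, capping the count at $4(2^n-1)$.

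For the matching lower bound and the Lemma~\ref{lem:cross} form, for each pair $(k,b)$ with $k\in\{0,1,2,3\}$ and $b\neq 0^n$ I would take the Pauli $P=i^k X^b$: then $P\ket{0^n}=i^k\ket{b}\perp\ket{0^n}$, so Lemma~\ref{lem:cross} directly delivers $(\ket{0^n}+i^k\ket{b})/\sqrt{2}$ as a stabilizer state of the required form, and the $4(2^n-1)$ states so obtained are pairwise distinct. For a general $\ket{\psi}$, apply Corollary~\ref{cor:stab_allzeros} to pick a Clifford $L$ with $L\ket{\psi}=\ket{0}^{\otimes n}$ and pull the enumeration back: each nearest neighbor becomes $(\ket{\psi}+(L^\dag PL)\ket{\psi})/\sqrt{2}$, where $L^\dag PL$ is still Pauli since the Clifford group normalizes the Pauli group, so Lemma~\ref{lem:cross} still covers it. The main obstacle is ruling out phases $c\notin\{1,-1,i,-i\}$ in the upper-bound step; the Clifford reduction to a single qubit above is the cleanest way I see to keep that argument self-contained within the paper's machinery.
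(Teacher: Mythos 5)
Your proof is correct and follows essentially the same route as the paper: reduce to $\ket{0}^{\otimes n}$ via Clifford invariance of inner products (Lemma~\ref{lem:num_stabsts}), use unbiasedness to force exactly two nonzero basis amplitudes, restrict the relative phase to $\{1,-1,i,-i\}$, and realize all $4(2^n-1)$ candidates via Lemma~\ref{lem:cross}. The only substantive difference is that you justify the fourth-root-of-unity phase restriction explicitly with a single-qubit Clifford reduction, a step the paper leaves implicit by deferring to the argument in Lemma~\ref{lem:cross}.
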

    \begin{proof}
		The all-zeros basis amplitude of any stabilizer state $\ket{\psi}$
		that is a nearest neighbor to $\ket{0}^{\otimes n}$
    		must be $\propto 1/\sqrt{2}$. Therefore, $\ket{\psi}$ is an unbiased superposition of
        $\ket{0}^{\otimes n}$ and one of the other $2^n-1$ basis states,
        i.e., $\ket{\psi} = \frac{\ket{0}^{\otimes n} + P\ket{0}^{\otimes n}}{\sqrt{2}}$,
        where $P \in \mathcal{G}_n$ such that 
        $P\ket{0}^{\otimes n} \neq \alpha\ket{0}^{\otimes n}$.
        As in the proof of Lemma~\ref{lem:cross},
        we have $\ket{\psi} = \frac{\ket{0}^{\otimes n} + i^t\ket{\varphi}}{\sqrt{2}}$,
        where $\ket{\varphi}$ is a basis state and $t \in \{0,1,2,3\}$. Thus,
        there are $4$ possible unbiased superpositions, and a total of
        $4(2^n - 1)$ nearest stabilizer states. Since $\ket{0}^{\otimes n}$ is a stabilizer
        state, all stabilizer states have the same number of nearest stabilizer states
        by Lemma~\ref{lem:num_stabsts}.
    \end{proof}
    
Table~\ref{tab:two_qbssts} shows that $\ket{00}$ has $12$ nearest-neighbor 
states. We computed inner products between all-pairs of $2$-qubit stabilizer
states and confirmed that each had exactly $12$ nearest neighbors. We used
the same procedure to verify that all $3$-qubit stabilizer states
have $28$ nearest neighbors. We verified the correctness of our algorithm
by comparing against inner product computations based on explicit basis
amplitudes.

\section{Stabilizer frames} \label{sec:stabframes}

Given an $n$-qubit stabilizer state $\ket{\psi}$, there exists an
orthonormal basis including $\ket{\psi}$ and consisting entirely of 
stabilizer states. Using Theorem \ref{th:stab_ortho}, one can generate
such a basis from the stabilizer representation of $\ket{\psi}$. 
Observe that, one can create a
state $\ket{\varphi}$ that is orthogonal to $\ket{\psi}$ by
changing the signs of an arbitrary non-empty subset of generators
of $S(\ket{\psi})$, i.e., by permuting the phase vector of the
stabilizer matrix for $\ket{\psi}$.
Moreover, selecting two different subsets will produce two mutually
orthogonal states. Thus, one can produce $2^n-1$ additional orthogonal stabilizer states.
Such states, together with $\ket{\psi}$, form an orthonormal basis. This is illustrated 
by Table~\ref{tab:two_qbssts} were each row constitutes an orthonormal basis.

	\begin{definition} \label{def:stab_frame}
		A {\em stabilizer frame} $\cF$ is a set of $k\leq 2^n$ stabilizer states
		that forms an orthonormal basis $\{\ket{\psi_1},\dots, \ket{\psi_k}\}$ 
		and spans a subspace of the $n$-qubit Hilbert space.
		$\cF$ is represented by a pair consisting of 
		({\em i}) a stabilizer matrix $\cM$ and ({\em ii}) a set of $k$ 
		distinct phase vectors $\sigma_j(\cM), j \in \{1,\ldots, k\}$. 
		The size of the frame, which we denote by $|\cF|$, is equal to $k$. 
	\end{definition}
	
\renewcommand\vec[1]{\ensuremath\boldsymbol{#1}}
	
Stabilizer frames are useful for representing arbitrary quantum states and
for simulating the action of stabilizer circuits on such states. Let 
$\vec{\alpha} = (\alpha_1, \ldots, \alpha_k) \in \mathbb{C}^k$ be 
the decomposition of the arbitrary $n$-qubit state $\ket{\phi}$ onto the basis 
$\{\ket{\psi_1},\dots, \ket{\psi_k}\}$ defined by $\cF$, i.e., 
$\ket{\phi} = \sum_{i=1}^k\alpha_k\ket{\psi_i}$. Furthermore, let
$U$ be a stabilizer gate. To simulate $U\ket{\phi}$, 
one simply {\em rotates the basis defined by $\cF$}
to get the new basis $\{U\ket{\psi_1},\dots, U\ket{\psi_k}\}$. This is accomplished
with the following two-step process: ({\em i}) update the stabilizer matrix 
$\cM$ associated with $\cF$ as per Section~\ref{sec:stab}; ({\em ii})  
iterate over the phase vectors in $\cF$ and update each accordingly 
(Table~\ref{tab:cliff_mult}). The second step is linear in the number 
of phase vectors as only a constant number of elements in each 
vector needs to be updated. Also, $\vec{\alpha}$ may
need to be updated, which requires the computation of the global 
phase of each $U\ket{\psi_i}$. Since the stabilizer
does not maintain global phases directly, each $\alpha_i$ is updated as follows:

	\begin{itemize}
		\item[{\bf 1.}] Use Gaussian elimination to obtain a basis state 
		$\ket{b}$ from $\cM$ (Observation~\ref{obs:stabst_amps}) and store 
		its non-zero amplitude $\beta$. If $U$ is the Hadamard gate, it
		may be necessary to sample a sum of two non-zero (one real, one 
		imaginary) basis amplitudes. 
		\item[{\bf 2.}] Compute $U\beta\ket{b}=\beta'\ket{b'}$ directly 
		using the state-vector representation.
		\item[{\bf 3.}] Obtain $\ket{b'}$ from $U\cM U^\dag$ and 
		store its non-zero amplitude $\gamma$. 
		\item[{\bf 4.}] Compute the global-phase factor generated 
		as $\alpha_i=(\alpha_i\cdot\beta')/\gamma$.
	\end{itemize}

%(\emph{i}) reduce $\cM$ to canonical form,
%(\emph{ii}) obtain a basis state $\ket{b}$ from $\cM$ and store its non-zero amplitude $\beta$, 
%(\emph{iii}) obtain $U\ket{b}$ from $U\cM U^\dag$ and store its
%non-zero amplitude $\gamma$, and (\emph{iv}) compute $\alpha_i=\alpha_i*\beta/\gamma$.

Observe that, all the above processes take time polynomial in $k$, therefore,
if $k = poly(n)$, $U\ket{\phi}$ can be simulated {\em efficiently} on 
a classical computer via frame-based simulation. 

\ \\ \noindent
{\bf Inner product between frames}. We now discuss how to use our algorithms
to compute the inner product between arbitrary quantum states. 
Let $\ket{\phi}$ and $\ket{\varphi}$ be quantum states 
represented by the pairs $<\cF^\phi,\ \vec{\alpha}= (\alpha_1, \ldots, \alpha_k)>$ and 
$<\cF^\varphi,\ \vec{\beta}= (\beta_1, \ldots, \beta_l)>$, respectively. 
The following steps compute $|\braket{\phi}{\varphi}|$.

	\begin{itemize}
		\item[{\bf 1.}] Apply Algorithm~\ref{alg:inprod_circ} to $\cM^\phi$ 
		(the stabilizer matrix associated with $\cF^\phi$)
		to obtain basis-normalization circuit $\cC$. 
		\item[{\bf 2.}] Rotate frames $\cF^\phi$ and $\cF^\varphi$ by $\cC$ as outlined
		in our previous discussion.
		\item[{\bf 3.}] Reduce $\cM^\phi$ to canonical form 
		(Algorithm~\ref{alg:gauss_min}) and record the row operations applied. 
		Apply the same row operations to each phase vector $\sigma^\phi_i, {i\in \{1,\ldots,k\}}$
		in $\cF^\phi$. Repeat this step for $\cM^\varphi$ and the phase vectors
		in $\cF^\varphi$. 
		\item[{\bf 4.}] Let $\cM^\phi_i$ denote that the leading-phases of the rows
		in $\cM^\phi$ are set equal to $\sigma^\phi_i$. Similarly, $\cM^\varphi_j$ denotes
		that the phases of $\cM^\varphi$ are equal to $\sigma^\varphi_j$.
		Furthermore, let $\delta(\cM^\phi_i, \cM^\varphi_j)$ be the
		function that returns $0$ if the orthogonality check from Algorithm~\ref{alg:inprod} 
		(lines 9--15) returns $0$, and $1$ otherwise. The inner product is computed as,
		\vspace{-5pt}
		\begin{equation*}
			|\braket{\phi}{\varphi}| = \frac{1}{2^{s/2}}\sum_{i=1}^k\sum_{j=1}^l
			|\alpha_i^*\beta_j|\cdot\delta(\cM^i_\phi, \cM^j_\varphi)
		\end{equation*}
		\noindent
		where $s$ is the number of rows in $\cM_\varphi$ that contain $X$ or $Y$ literals. 
	\end{itemize}
	
Prior work on representation
of arbitrary states using the stabilizer formalism can be found in \cite{AaronGottes}.
The authors propose an approach that represents a quantum state as a sum 
of density-matrix terms. Our frame-based technique offers more compact storage 
($|\cF| \leq 2^n$ whereas a density matrix may have $4^n$ non-zero entries)
but requires more sophisticated book-keeping.
	
\section{Conclusion} \label{sec:conclude}

The stabilizer formalism facilitates compact representation of 
stabilizer states and efficient simulation of stabilizer circuits.
Stabilizer states appear in many different quantum-information applications,
and their efficient manipulation via geometric and linear-algebraic operations
may lead to additional insights. To this end, we study algorithms to 
efficiently compute the inner product between stabilizer states. 
A crucial step of this computation is the synthesis
of a canonical circuit that transforms a stabilizer state into a computational
basis state. We designed an algorithm to synthesize such circuits
using a $5$-block template structure and showed that these circuits 
contain $O(n^2)$ stabilizer gates.
%Furthermore, we described how our circuit-synthesis algorithm can be used to 
%obtain diagonalization circuits for the Hamiltonian of a stabilizer state.
We analysed the performance of our inner-product algorithm and 
showed that, although its runtime is $O(n^3)$, there are practical instances 
in which it runs in linear or quadratic time. Furthermore, we proved that 
an $n$-qubit stabilizer state has exactly $4(2^n-1)$ nearest-neighbor
states and verified this result experimentally. Finally, we designed 
techniques for representing arbitrary quantum states
using stabilizer frames and generalize our algorithms to compute 
the inner product between two such frames.

%\newpage

% references section

% can use a bibliography generated by BibTeX as a .bbl file
% BibTeX documentation can be easily obtained at:
% http://www.ctan.org/tex-archive/biblio/bibtex/contrib/doc/
% The IEEEtran BibTeX style support page is at:
% http://www.michaelshell.org/tex/ieeetran/bibtex/
\bibliographystyle{IEEEtran}
% argument is your BibTeX string definitions and bibliography database(s)
%\bibliography{IEEEabrv,../bib/paper}

\begin{thebibliography}{1}
\small
\bibitem{AaronGottes}
S. Aaronson and D. Gottesman, ``Improved simulation of stabilizer circuits,''
Phys. Rev. A, vol. 70, no. 052328 (2004). 

%\bibitem{Aggour}
%K. S. Aggour, et. al. ``Efficient quantum computing simulation through dynamic matrix restructuring and %distributed evaluation'', Proc. of the 2007 IEEE Inter. Conf. on Cluster Computing. 

%\bibitem{AnderBrieg}
%S. Anders and H. J. Briegel, ``Fast simulation of stabilizer circuits using a graph-state representation,''
%Phys. Rev. A, vol. 73, 022334 (2006).

\bibitem{Audenaert}
K. M. R. Audenaert and M. B. Plenio, ``Entanglement on mixed stabiliser states: normal norms and reduction procedures,'' New J. Phys., vol. 7, no. 170 (2005).

\bibitem{Calder} A. Calderbank, E. Rains, P. Shor, and N. Sloane, ``Quantum error correction via codes over GF(4),'' IEEE Trans. Inf. Theory, vol. 44, pp. 1369--1387 (1998).

\bibitem{Cleve} R. Cleve and D. Gottesman, ``Efficient computations of encodings for quantum error correction,''
Phys. Rev. A, vol. 56, no. 1 (1997).

%\bibitem{Cross} A. Cross, Personal Communication, October 2010.

%\bibitem{CrossPhD} A. Cross, ``Fault-tolerant quantum computer architectures using heirarchies
%of quantum error-correcting codes'', Ph.D. diss. MIT, May 2008,
%\url{ http://dspace.mit.edu/handle/1721.1/44407}.

%\bibitem{DeRaedt}
%K. De Raedt, et. al., ``Massively parallel quantum computer simulator'', 
%Computer Physics Communications, vol. 176, pp. 121-136 (2007).

%\bibitem{Dean} J. Dean and S. Ghemawat, ``MapReduce: simplified data processing on large clusters,''
%Commun. ACM vol. 51, no. 1, pp. 107-113 (2008).

\bibitem{DehaeDemoor} J. Dehaene and B. De Moor, ``Clifford group, stabilizer states, and linear and quadratic operations over GF(2),'' Phys. Rev. A, vol. 68, no. 042318 (2003).

\bibitem{Djor} I. Djordjevic, ``Quantum information processing and quantum error correction: 
an engineering approach, '' Academic press (2012). 

\bibitem{Dur} W. Dur, H. Aschauer and H.J. Briegel, ``Multiparticle entanglement purification 
for graph states,'' Phys. Rev. Lett., vol. 91, no. 10 (2003).

\bibitem{Fattal} D. Fattal, T. S. Cubitto, Y. Yamamoto, S. Bravyi and I. L. Chuang, 
``Entanglement in the stabilizer formalism,'' arXiv:0406168 (2004).

\bibitem{Gottes}
D. Gottesman, ``Stabilizer codes and quantum error correction,'' Caltech Ph.D. thesis (1997).

\bibitem{Gottes98} D. Gottesman, ``The Heisenberg representation of quantum computers,''
arXiv:9807006v1 (1998).

%\bibitem{Grover}L. Grover, ``A Fast Quantum Mechanical Algorithm for Database Search,'' STOC, pp. 212-219 (1996).

\bibitem{Guehne} O. Guehne, G. Toth, P. Hyllus and H.J. Briegel,``Bell inequalities for graph states,''
Phys. Rev. Lett., vol. 95, no. 120405 (2005). 

\bibitem{Hein} M. Hein, J. Eisert, and H.J. Briegel,``Multi-party entanglement in graph states,'' 
Phys. Rev. A, vol. 69, no. 6 (2004). 

\bibitem{Knill}E. Knill, et. al., ``Randomized benchmarking of quantum gates,''
Phys. Rev. A, vol. 77, no. 1 (2007).

%\bibitem{MarkovSaeedi} I. L. Markov and M. Saeedi, ``Constant-optimized Quantum circuits for Modular
%Multiplication and Exponentiation,'' unpublished.

%\bibitem{MarkovShi} I. L. Markov and Y.-Y. Shi, ``Simulating Quantum Computation by Contracting Tensor Networks,'' SIAM Journal on Computing, vol. 38, no. 3, pp. 963-981 (2008).

\bibitem{NielChu}
M. A. Nielsen and I. L. Chuang, ``Quantum Computation and Quantum Information,''
Cambridge  University Press (2000).

%\bibitem{Obenland}
%K. M. Obenland and A. M. Despain, ``A Parallel Quantum Computer Simulator, ''
%High Performance Computing (1998), quant-ph/9804039.

%\bibitem{Oemer}
%B. Oemer (2003), \url{http://tph.tuwien.ac.at/~oemer/qcl.html}.

\bibitem{PatelMarkov}
K. N. Patel, I. L. Markov and J. P. Hayes, ``Optimal synthesis of linear reversible circuits,'' 
Quant. Inf. Comp., vol. 8, no. 3 (2008).

%\bibitem{ShendeMarkov}
%V. V. Shende and I. L. Markov, ``On the CNOT-cost of TOFFOLI Gates'', Quantum Information and Computation,
%vol. 9, no. 5-6, pp. 461-486 (2009).

%\bibitem{ShendeBull}
%V. V. Shende, S. S. Bullock and I. L. Markov, ``Synthesis of Quantum Logic Circuits,'' 
%IEEE Trans. on Computer-Aided Design, vol. 25, no. 6, pp. 1000-1010 (2006).

%\bibitem{Shi}
%Y. Shi, ``Both Toffoli and Controlled-NOT need little help to do universal quantum computation,''
%Quantum Information And Computation, vol. 3, no. 1, pp. 1-11 (2003). argXiv:quat-ph/0205115v2.

%\bibitem{Shor}P. Shor,  ``Polynomial-time algorithms for prime factorization and discrete logarithms on a quantum computer,'' SIAM J. Comput, vol. 26, no. 5, pp. 1484-1509 (1997).

\bibitem{Vanden} M. Van den Nest, ``Classical simulation of quantum computation, the Gottesman-Knill theorem, and slightly beyond,'' Quant. Inf. Comp., vol. 10, pp. 0258--0271 (2010).

\bibitem{Vanden04} M. Van den Nest, J. Dehaene and B. De Moor, ``On local unitary versus local Clifford equivalence of stabilizer states,'' Phys. Rev. A, vol. 71, no. 062323 (2005).

%\bibitem{Verst} F. Verstraete, J. I. Cirac and J. I. Latorre, 
%``Quantum circuits for strongly correlated quantum systems,'' Phys. Rev. A, vol. 79, no. 032316 (2009).

%\bibitem{Viamontes} G. F. Viamontes, I. L. Markov, J. P. Hayes, Quantum Circuit Simulation. Springer, 2009.

\bibitem{Wunder}
H. Wunderlich and M. B. Plenio, ``Quantitative verification of fidelities and entanglement from incomplete measurement data'', J. Mod. Opt., vol. 56, pp. 2100--2105 (2009). 

\end{thebibliography}
%
% <OR> manually copy in the resultant .bbl file
% set second argument of \begin to the number of references
% (used to reserve space for the reference number labels box)

\newpage

\appendices
 
\section{The 1080 three-qubit stabilizer states}
\label{app:three_qbssts}

Shorthand notation represents a stabilizer state as $\alpha_0, \alpha_1, \alpha_2, \alpha_3$ where $\alpha_i$ are the normalized amplitudes of the basis states. Basis states are emphasized in bold. The $\angle$ column indicates the angle between that state and $\ket{000}$, which has $28$ nearest-neighbor states and $315$ orthogonal states ($\perp$). \index{stabilizer state!three-qubit}

\begin{center}
	\vspace{10pt}
	\scriptsize
	\scalebox{.6}[.5]{% [inline block 0: 2 envs, 212490 chars -> data_tex | \begin{tabular}{|c|c||c|c||c|c||c|c||c|c||c|c||c|c||c|c|}\hline \sc State & \sc Gen'tors  & \sc State & \sc Gen'tors  & ...]

}
\end{center}

\end{document}